\documentclass[onecolumn,11pt,aps,nofootinbib,superscriptaddress,tightenlines,showkeys]{revtex4}

\usepackage[T1]{fontenc}
\usepackage[utf8]{inputenc}

\usepackage[cmex10]{amsmath}
\usepackage{graphics}
\usepackage{dsfont}
\usepackage{amssymb}
\usepackage{graphicx}
\usepackage{mathrsfs}
\usepackage{units}
\usepackage{pgfplots}
\usepackage{enumitem}
\usepackage[colorlinks=true,linkcolor=black,citecolor=black,plainpages=false,pdfpagelabels]{hyperref}
\hypersetup{pdftitle={Eigenvalue estimates for the resolvent of a non-normal matrix}}

\usepackage{amsthm}
\setlength{\textwidth}{15cm}\setlength{\textheight}{21.3cm}
\setlength{\oddsidemargin}{0.5cm}\setlength{\evensidemargin}{0.5cm}\setlength{\topmargin}{.2cm}


\newtheorem{theorem}{Theorem}[section]
\newtheorem{lemma}[theorem]{Lemma}
\newtheorem{proposition}[theorem]{Proposition}
\newtheorem{corollary}[theorem]{Corollary}
\newtheorem{definition}{Definition}[section]

\newenvironment{example}[1][Example]{\begin{trivlist}
\item[\hskip \labelsep {\bfseries #1}]}{\end{trivlist}}
\newenvironment{remark}[1][Remark]{\begin{trivlist}
\item[\hskip \labelsep {\bfseries #1}]}{\end{trivlist}}




\newcommand{\tr}{\textnormal{tr}}












\newcommand{\R}{\mathbb{R}}

\newcommand{\id}{\ensuremath{\mathds{1}}}



















\hyphenation{Ja-mioł-kow-ski}


\begin{document}
\title{Explicit Frames for Deterministic Phase Retrieval via PhaseLift}
\author{Michael Kech}
\email{kech@ma.tum.de}
\affiliation{Department of Mathematics, Technische Universit\"{a}t M\"{u}nchen, 85748 Garching, Germany}

\date{\today}

\begin{abstract}
We explicitly give a frame of cardinality $5n-6$ such that every signal in $\mathbb{C}^n$ can be recovered up to a phase from its associated intensity measurements via the PhaseLift approach. Furthermore, we give explicit linear measurements with $4r(n-r)+n-2r$ outcomes that enable the recovery of every positive semidefinite $n\times n$ matrix of rank at most $r$.
\end{abstract}
\keywords{phase retrieval, PhaseLift, low-rank matrix recovery, quantum state tomography}

\maketitle

\tableofcontents

\section{Introduction and Main Result}
Phase Retrieval is the task of reconstructing a signal $x\in\mathbb{C}^n$ up to a phase from intensity measurements.

In \cite{balan2006signal} it was shown that $m\ge 4n-2$ generic intensity measurements suffice to discriminate any two signals in $\mathbb{C}^n$ up to a phase. With a similar approach this result was slightly improved to $m\ge 4n-4$ in \cite{conca2014algebraic} \footnote{In the context of pure state tomography, \cite{kech2,mondragon2013determination,carmeli2015many} show that the $4n-4$ bound also holds for von Neumann measurements. In addition similar bounds for the recovery of low-rank matrices with constrained measurements are provided in \cite{kech2}.}.  The bound  $m\ge 4n-4$ is known to be close to optimal. More precisely, by relating phase retrieval to the problem of embedding complex projective space in Euclidean space, it was shown in \cite{heinosaari2013quantum} that, up to terms at most logarithmic in $n$, $m\ge 4n-4$ intensity measurements are necessary to discriminate any two signals in $\mathbb{C}^n$ up to a phase. However, \cite{balan2006signal,conca2014algebraic} do not provide a tractable recovery scheme.
A result indicating that some redundancy is needed in order to allow for computationally efficient phase retrieval is given in \cite{fickus2014phase}.

There have been several approaches that do provide recovery schemes \cite{balan2009painless,alexeev2014phase,bandeira2014phase}, in the present paper however we focus on the approach of \cite{candes2015phase} known as PhaseLift. Their approach consists of two steps: First, phase retrieval is lifted to the problem of recovering rank one Hermitian matrices from linear measurements. Secondly, by means of a convex relaxation, the recovery problem is formulated as a trace norm minimization over a spectrahedron. The authors of \cite{candes2015phase} then prove that $\mathcal{O}(n)$ intensity measurements suffice to recover a signal modulo phase with high probability by solving the relaxed optimization problem. Furthermore, stability guarantees for the recovery were established in \cite{candes2013phaselift,candes2014solving}. While these convex relaxations are in principal tractable, solving them becomes computationally expensive with increasing signal dimension \cite{candes2015wirtinger}.

However, \cite{candes2015phase,candes2013phaselift,candes2014solving} still leave room for improvement. For example, by working with Gaussian random vectors additional structure that might facilitate the use of PhaseLift is not incorporated and also from a practical point of view Gaussian random vectors might not be desirable. Recently, it was shown that a partial derandomization of PhaseLift can be achieved by using spherical designs \cite{gross2015partial,kueng2015spherical}. The purpose of the present paper is similar. However, rather than drawing the measurements from a smaller, possibly better structured set, we aim for finding explicit measurements that allow for phase retrieval via PhaseLift. Another deterministic approach to the phase retrieval problem was introduced in \cite{bodmann2015stable}. They improved their results in \cite{bodmann2016algorithms}, providing recovery algorithms together with explicit error bounds for phase retrieval with $6n-3$ frame vectors. 

Our contribution is the following: We explicitly give $5n-6$ intensity measurements from which every signal in $\mathbb{C}^n$ can be reconstructed up to a phase using PhaseLift. More precisely, for $k\in\{1,\dots,2n-3\}$ let
\begin{align}\label{eq7}
v_k:=\begin{pmatrix}
1,&
x_k\ e^{\frac{i\pi}{2n}},&
x_k^2\ e^{2\frac{i\pi}{2n}},&
\dots,&
x_k^{n-1}\ e^{(n-1)\frac{i\pi}{2n}}
\end{pmatrix}^t,\ x_k\in\mathbb{R}\setminus\{0\}.
\end{align}
Furthermore denote by $\{e_i\}_{i\in\{0,\dots,n-1\}}$ the standard orthonormal basis of $\mathbb{C}^n$.
\begin{theorem}\label{thm0}
If $x_1<x_2<\ldots<x_{2n-3}$, then every signal $x\in\mathbb{C}^n$ can be reconstructed up to a phase from the $5n-6$ intensities 
\begin{align*}
\{|\langle e_0,x\rangle|^2,\dots,|\langle e_{n-1},x\rangle|^2,|\langle v_1,x\rangle|^2,|\langle\overline{v}_1,x\rangle|^2,\dots,|\langle v_{2n-3},x\rangle|^2,|\langle\overline{v}_{2n-3},x\rangle|^2\}
\end{align*}
via PhaseLift.
\end{theorem}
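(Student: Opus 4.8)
\emph{Proof proposal.}

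The plan is to argue on the PhaseLift side. Put $X:=xx^\ast$ and let $\mathcal A$ be the linear map sending a Hermitian $Z$ to the list of numbers $\tr(e_ie_i^\ast Z)$, $\tr(v_kv_k^\ast Z)$, $\tr(\overline{v}_k\overline{v}_k^\ast Z)$; PhaseLift solves $\min\{\tr Z:\ Z\succeq0,\ \mathcal A(Z)=\mathcal A(X)\}$. The first and decisive reduction is that the $n$ standard--basis intensities fix $Z_{ii}=|x_i|^2$ for every feasible $Z$, so $\tr Z=\|x\|^2$ is constant on the feasible set; the objective therefore plays no role, and PhaseLift returns $X$ unambiguously precisely when $X$ is the \emph{only} positive semidefinite matrix consistent with the $5n-6$ intensities. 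Hence it suffices to prove: if $Z\succeq0$ and $\mathcal A(Z)=\mathcal A(xx^\ast)$, then $Z=xx^\ast$. (Equivalently one may exhibit a non--degenerate dual certificate for the SDP; since $\tr$ is constant on the feasible set the two formulations amount to the same thing.)

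Next I would remove the support. For $Z\succeq0$ the $2\times2$ principal minor on $\{i,j\}$ gives $|Z_{ij}|^2\le Z_{ii}Z_{jj}=|x_i|^2|x_j|^2$, whence $Z_{ij}=0$ if $x_i=0$ or $x_j=0$; so $Z$ and $xx^\ast$ are supported on $\operatorname{supp}x\times\operatorname{supp}x$ and everything restricts to those coordinates. The substance is the case in which $x$ has no zero entry, which I assume from here on.

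Now the polynomial reformulation, where the explicit frame is used. Write $p_w(z)=\sum_{j=0}^{n-1}w_jz^j$ and $\overline{p_w}(z)=\sum_j\overline{w_j}z^j$, and set $\omega=e^{i\pi/(2n)}$, $\mu=\omega^2=e^{i\pi/n}$. A short computation gives $|\langle v_k,w\rangle|^2=|p_w(\bar\omega x_k)|^2$ and $|\langle\overline{v}_k,w\rangle|^2=|p_w(\omega x_k)|^2$, and since $x_k\in\mathbb R$ one has $|p_w(\bar\omega x_k)|^2=p_w(\bar\omega x_k)\,\overline{p_w}(\omega x_k)$. Thus for $Z=\sum_\ell w^{(\ell)}(w^{(\ell)})^\ast\succeq0$ the $k$-th $v$-intensity is the value at $x_k$ of $\sum_\ell p_{w^{(\ell)}}(\bar\omega z)\,\overline{p_{w^{(\ell)}}}(\omega z)$, a polynomial of degree $\le2n-2$ whose constant and leading coefficients are $Z_{00}=|x_0|^2$ and $Z_{n-1,n-1}=|x_{n-1}|^2$ — already fixed by the diagonal data. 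Matching this polynomial with the analogous one for $xx^\ast$ at the $2n-3$ distinct nonzero nodes $x_1<\dots<x_{2n-3}$ (the relevant Vandermonde determinant, $\prod_kx_k\prod_{k<k'}(x_{k'}-x_k)$, is nonzero) and then substituting $z\mapsto\omega z$ forces the polynomial identity
\[
\sum_\ell p_{w^{(\ell)}}(z)\,\overline{p_{w^{(\ell)}}}(\mu z)=p_x(z)\,\overline{p_x}(\mu z),
\]
and symmetrically, from the $\overline{v}_k$'s, the same identity with $\mu$ replaced by $\bar\mu$. Restricting $z$ to the two lines $\ell_\pm=e^{\pm i\pi/(2n)}\mathbb R$, on which $\mu z=\bar z$ respectively $\bar\mu z=\bar z$, this gives $\sum_\ell|p_{w^{(\ell)}}(z)|^2=|p_x(z)|^2$ for every $z\in\ell_+\cup\ell_-$.

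The last step, deducing $Z=xx^\ast$, is the crux. Form $E(s,t):=\sum_{p,q}(Z-xx^\ast)_{pq}s^pt^q=\sum_\ell p_{w^{(\ell)}}(s)\,\overline{p_{w^{(\ell)}}}(t)-p_x(s)\,\overline{p_x}(t)$, of bidegree $\le(n-1,n-1)$. The two identities say exactly that $E$ vanishes on the lines $t=\mu s$ and $t=\bar\mu s$ of $\mathbb C^2$; these are distinct because $\mu\notin\mathbb R$ — this is where the value $\omega=e^{i\pi/(2n)}$ enters — so $(t-\mu s)(t-\bar\mu s)=t^2-2\cos(\pi/n)\,st+s^2$ divides $E$, say $E=(t^2-2\cos(\pi/n)\,st+s^2)H$ with $H$ of bidegree $\le(n-3,n-3)$. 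Combining this with the vanishing of the diagonal coefficients of $E$ (because $Z_{jj}=|x_j|^2$), the Hermitian symmetry $E_{pq}=\overline{E_{qp}}$, and — crucially — the fact that $E(z,\bar z)+|p_x(z)|^2=\sum_\ell|p_{w^{(\ell)}}(z)|^2\ge0$ for all $z$ (which is exactly $Z\succeq0$ tested on Vandermonde vectors), one wants to conclude $H\equiv0$, i.e.\ $Z=xx^\ast$. For $n\le3$ there is nothing to do ($\mathcal A$ is already injective on Hermitian matrices, $\ker\mathcal A=0$). For $n=4$ the argument closes cleanly: the diagonal conditions leave $H$ a single complex parameter, and the compression $P_{x^\perp}(Z-xx^\ast)P_{x^\perp}\succeq0$ forces, through eigenvalue interlacing together with $\tr(Z-xx^\ast)=0$, that $Z-xx^\ast$ has at most one negative eigenvalue — which its (two $2\times2$--block) structure makes impossible unless it is zero. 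Carrying this out uniformly in $n$ — for instance by inducting on the "overfull" anti--diagonals that support $\ker\mathcal A$, or by trapping $H$ between the positivity on $\ell_+\cup\ell_-$ and its sign behaviour in the sectors between those lines — is the part I expect to be the real obstacle, and it is precisely there that the node count $2n-3$ (just enough to recover a degree-$(2n-2)$ product polynomial from its two end coefficients) and the transversality of $\ell_+,\ell_-$ are used essentially.
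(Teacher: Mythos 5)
Your reduction is sound as far as it goes, and it mirrors the paper's setup: because the $n$ coordinate intensities pin down the diagonal of any feasible $Z$, the trace is constant on the feasible set, so PhaseLift succeeds iff $xx^*$ is the \emph{only} PSD matrix consistent with the data (this is the paper's Proposition \ref{prop1}). Your polynomial reformulation is also correct and is in fact equivalent to the paper's description of $\ker M_G$: the two identities you derive say precisely that on each anti-diagonal $m$ the difference $D=Z-xx^*$ satisfies $\sum_{j+l=m}D_{jl}\mu^{l}=\sum_{j+l=m}D_{jl}\bar\mu^{\,l}=0$, which after using $D_{lj}=\overline{D_{jl}}$ are the conditions $\tr(X_mD)=\tr(Y_mD)=0$ with the paper's matrices $(X_m)_{jl}=\delta_{j+l,m}\cos\bigl(\tfrac{j-l}{2n}\pi\bigr)$, $(Y_m)_{jl}=i\delta_{j+l,m}\sin\bigl(\tfrac{j-l}{2n}\pi\bigr)$.

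The genuine gap is the step you yourself flag: showing that no \emph{nonzero} $D$ in this kernel can have $xx^*+D\succeq 0$. Your bivariate factorization $E=(t^2-2\cos(\pi/n)st+s^2)H$ correctly encodes kernel membership but gives no control on the spectrum of $D$, and neither of your two suggested continuations is carried out beyond $n\le 4$; so for general $n$ the theorem is not proved. The missing idea is spectral, not algebraic, and it is where the paper (following Chen et al.\ and Cubitt et al.) does its work: take the lowest anti-diagonal $m$ on which $D\neq 0$. Its real part (say) is a nonzero vector orthogonal to the all-nonzero vector $\bigl(\cos(\tfrac{j-l}{2n}\pi)\bigr)_{j+l=m,\,j<l}$ --- nonvanishing precisely because $0<\tfrac{j-l}{2n}\pi<\tfrac{\pi}{2}$, which is where the phase $e^{i\pi/2n}$ is really used --- and a vector orthogonal to an all-nonzero vector must have at least two nonzero entries. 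Hence that anti-diagonal of $D$ carries at least four nonzero entries, which yields a $4\times4$ principal submatrix of $D$ in ``anti-triangular'' form with nonzero anti-diagonal corners; such a matrix has two positive and two negative eigenvalues (alternating signs of the nested determinants), so by Cauchy interlacing $D$ has at least two negative eigenvalues. But $D=Z-xx^*$ with $Z\succeq0$ is positive semidefinite on the hyperplane $x^{\perp}$ and therefore has at most one negative eigenvalue --- contradiction. Supplying this anti-diagonal/interlacing argument (or an equivalent positivity argument forcing $H\equiv0$) is exactly what your proposal still needs.
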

This result is stated more carefully in Section \ref{1} as Corollary \ref{cor1}. Its proof relies on the results of \cite{chen2013uniqueness}.

Let us highlight three features of this result: 
\begin{enumerate}
\item To our knowledge the $5n-6$ is the smallest number of intensity measurements that allow for a uniform and computationally tractable recovery.
\item Results based on random intensity measurements typically guarantee that the recovery succeeds with high probability if the number of measurements exceeds a given threshold which is usually determined up to a multiplicative constant. As opposed to this, Theorem \ref{thm0} comes with two advantages that might be desirable from a practical point of view: First, the recovery is not just guaranteed to succeed with high probability but indeed works deterministically. Secondly, since the measurements are given explicitly there is no need for finding a suitable value for the threshold. 
\item Theorem \ref{thm0} merely requires $5n-6$ intensity measurements. This illustrates that $n$ additional measurements as compared to the nearly optimal bound of \cite{balan2006signal} suffice to render PhaseLift feasible.
\end{enumerate}

The approach we take originates from low-rank matrix recovery \cite{candes2009exact,candes2010matrix,candes2010power,recht2010guaranteed,gross2010quantum} and indeed the previous results can be generalised to this setting: In Section \ref{1}, we give an explicit family of linear measurements with $4r(n-r)+n-2r$ outcomes from which every positive $n\times n$ matrix of rank at most $r$ can be recovered by means of a semidefinite program. This strongly relies on the construction of the null spaces of such measurements given in \cite{chen2013uniqueness}. Our contribution is to explicitly characterize the orthogonal complements of these null spaces leading to the proofs of our main results. 

Finally we also prove a weak stability result in Section \ref{stab}, showing that the reconstruction error is linear in the error scale. As we do not know how to estimate the constant of proportionality appearing in the stability bound, this result is not of practical relevance, but might give a roadmap for proving stability in the future. However, we provide some numerical results that might indicate the constant's qualitative behaviour.

\section{Preliminaries}\label{0}
Let us first fix some notation. By $M(n,q)\ (M(n,q,\mathbb{R}))$ we denote the set of complex (real) $n\times q$ matrices. The transpose (conjugate transpose) of a matrix $A\in M(n,q)$ is denoted by $A^t$ ($A^*$). For $i\in\{0,\dots,n-1\},\ j\in\{0,\dots,q-1\}$, we denote the entry in the $i$-th row and $j$-th column of a matrix $A\in M(n,q)$ by $A_{ij}$ \footnote{Note that the indices we use to label matrices begin with $0$, not with $1$.}. By $H(n)$ we denote the real vector space of Hermitian $n\times n$ matrices. We equip $H(n)$ with the Hilbert-Schmidt inner product and $\|\cdot\|_2$ denotes the Frobenius norm. By $\mathcal{S}^n$ we denote the set of positive semidefinite $n\times n$ matrices and by $\mathcal{S}_r^n\subseteq \mathcal{S}^n$ we denote the subset of positive semidefinite  matrices of rank at most $r$. In the following we assume that $r\in\{1,\dots,\lceil n/2\rceil-1\}$ \footnote{$\lceil k\rceil:=$ the smallest integer $i$ such that $i\geq k$.}. The set of linear maps $M:H(n)\to \mathbb{R}^m$ is denoted by $\mathcal{M}(m)$. 

\begin{definition}($m$-measurement.)
An $m$-measurement is an element of $\mathcal{M}(m)$.
\end{definition}
In the following we denote an $m$-measurement simply by measurement if we do not want to specify $m$.
\begin{remark}
For each $m$-measurement $M$ there exists a unique $G:=(G_{1},\dots,G_{m})\in H(n)^m$ such that
\begin{align*}
M(X)=\big( \text{tr}(G_1X),\dots,\text{tr}(G_mX) \big)
\end{align*}
for all $X\in H(n)$. By $M_G$ we denote the $m$-measurement associated in this way to an $G\in H(n)^m$. In the following we sometimes use this identification to speak of elements $G\in H(n)^m$ as $m$-measurements.
\end{remark}
\begin{definition}($r$-complete.)
A measurement $M$ is called $r$-complete iff $M(X)\neq M(X^\prime)$ for all $X\in\mathcal{S}_r^n$ and $X^\prime\in\mathcal{S}^n$ with $X\neq X^\prime$. A tuple $G\in H(n)^m$ is called $r$-complete iff $M_G$ is $r$-complete.
\end{definition}
Given a measurement $M$ and a measurement outcome $b=M(X),\ X\in\mathcal{S}_r^n$, consider the following well-known semi-definite program \cite{candes2009exact,candes2010power,recht2010guaranteed} \footnote{This is a convex relaxation of the rank minimization problem.}
\begin{gather}\label{sdp}
\begin{split}
\text{minimize}\ \text{tr}(Y)\ \ \ \ \ \ \ \ \ \ \\
\text{subject to}\ Y\geq 0,\ M(Y)=b.
\end{split}
\end{gather}
The significance of the $r$-complete property is due to the following observation: 
\begin{proposition}\label{prop1}
Let $M$ be an $r$-complete measurement and let $X\in\mathcal{S}_r^n$. If $b=M(X)$, then $X$ is the unique minimizer of the semidefinite program \eqref{sdp}.
\end{proposition}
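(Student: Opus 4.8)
The plan is to observe that, under the hypothesis of $r$-completeness, the feasible set of the semidefinite program \eqref{sdp} collapses to the single point $\{X\}$, so that the trace objective never actually has to be compared between distinct candidates. First I would verify that $X$ is feasible for \eqref{sdp}: since $X\in\mathcal{S}_r^n\subseteq\mathcal{S}^n$ we have $X\geq 0$, and by assumption $M(X)=b$, so $X$ satisfies both constraints. In particular the feasible set is nonempty and the minimum, if the set is a singleton, is attained at $X$.

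The core step is to show that $X$ is the \emph{only} feasible point. Let $Y$ be any feasible point of \eqref{sdp}, i.e. $Y\geq 0$ (equivalently $Y\in\mathcal{S}^n$) and $M(Y)=b=M(X)$. Now apply the definition of $r$-completeness to the pair consisting of $X\in\mathcal{S}_r^n$ and $Y\in\mathcal{S}^n$: it asserts $M(X)\neq M(Y)$ whenever $X\neq Y$. Contrapositively, $M(X)=M(Y)$ forces $X=Y$. Hence every feasible $Y$ equals $X$, the feasible set of \eqref{sdp} is exactly $\{X\}$, and therefore $X$ is the unique minimizer.

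There is essentially no real obstacle here — the statement is a direct unwinding of the definition of $r$-completeness — and the only point that deserves care is the \emph{direction} in which that definition is invoked: $r$-completeness distinguishes a rank-$\le r$ matrix not merely from other low-rank matrices but from every element of the full PSD cone $\mathcal{S}^n$, which is precisely the PSD feasible region of \eqref{sdp}. It is worth remarking that, for this reason, the trace objective of \eqref{sdp} plays no role in the uniqueness claim in this exact (noiseless) setting; its relevance will only surface later when the stability analysis of Section \ref{stab} perturbs $b$.
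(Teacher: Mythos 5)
Your proof is correct and follows exactly the paper's own argument: both reduce the problem to showing that the feasible set $\{Y\geq 0,\ M(Y)=b\}$ is the singleton $\{X\}$, by applying the definition of $r$-completeness to the pair $X\in\mathcal{S}_r^n$, $Y\in\mathcal{S}^n$. Your added remark that the trace objective is irrelevant in the noiseless setting is accurate and matches the paper's observation that the SDP reduces to a feasibility problem.
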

\begin{proof}
Let $X\in\mathcal{S}_r(\mathbb{C}^n)$ be a Hermitian matrix of rank at most $r$ and let $M$ be an $r$-complete measurement. Then, $X$ is the unique feasible point of the spectrahedron 
\begin{align}\label{sph}
\{Y\in H(n): Y\geq 0,\ M(Y)=M(X)\}.
\end{align} This follows immediately from $\{Y\in H(n): Y\geq 0,\ M(Y)=M(X)\}=\{Y\in \mathcal{S}^n:M(Y)=M(X)\}$ and the definition of $r$-complete.
\end{proof}
\begin{remark}
Note that if $\id\in\text{Range}(M^*)$, the $r$-complete property also is necessary for a deterministic reconstruction via the semidefinite program \eqref{sdp}.
\end{remark}
This shows that for an $r$-complete measurement the semidefinite program \eqref{sdp} reduces to a feasibility problem.

Finally, let us state the observation of \cite{chen2013uniqueness,CarmeliTeikoJussi1} which gives a useful characterization of the $r$-complete property:
\begin{proposition}\label{proprank}
A measurement $M$ is $r$-complete if and only if every nonzero $X\in\text{Ker}(M)$ has at least $r+1$ positive eigenvalues.
\end{proposition}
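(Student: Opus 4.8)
The plan is to establish the contrapositive of both implications simultaneously: $M$ fails to be $r$-complete precisely when $\text{Ker}(M)$ contains a nonzero Hermitian matrix having at most $r$ positive eigenvalues. The whole argument rests on the decomposition of a Hermitian matrix $D$ into its positive and negative parts, $D = D_+ - D_-$ with $D_+, D_- \geq 0$ and orthogonal ranges, so that $\text{rank}(D_+)$ equals the number of positive eigenvalues of $D$ and $\text{rank}(D_-)$ the number of negative ones.

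For the direction that is essentially immediate, suppose $D \in \text{Ker}(M)$ is nonzero and has at most $r$ positive eigenvalues. Then $X := D_+$ lies in $\mathcal{S}_r^n$, $X' := D_-$ lies in $\mathcal{S}^n$, and $X \neq X'$ because $X - X' = D \neq 0$; yet $M(X) - M(X') = M(D) = 0$ by linearity of $M$. Hence $M$ is not $r$-complete.

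For the converse, assume $M$ is not $r$-complete, so there exist $X \in \mathcal{S}_r^n$ and $X' \in \mathcal{S}^n$ with $X \neq X'$ and $M(X) = M(X')$; then $D := X - X'$ is a nonzero element of $\text{Ker}(M)$, and it remains only to bound its number of positive eigenvalues by $r$. Let $V$ be the span of the eigenvectors of $D$ associated to strictly positive eigenvalues. For every nonzero $v \in V$ one has $\langle v, Xv \rangle = \langle v, Dv \rangle + \langle v, X'v \rangle > 0$, since the first term is positive and the second nonnegative (using $X' \geq 0$); consequently $V \cap \text{Ker}(X) = \{0\}$, which forces $\dim V \leq \text{rank}(X) \leq r$. (Alternatively, Weyl's inequality gives $\lambda_k(X - X') \leq \lambda_k(X) + \lambda_1(-X') \leq \lambda_k(X)$, which is zero for $k > \text{rank}(X)$.) Since the number of positive eigenvalues of $D$ equals $\dim V \leq r$, this contradicts the hypothesis that every nonzero element of $\text{Ker}(M)$ has at least $r+1$ positive eigenvalues, and the proof is complete.

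I do not expect a real obstacle: the statement is a structural reformulation, and its only nontrivial ingredient is the observation that subtracting a positive semidefinite matrix from a positive semidefinite matrix of rank at most $r$ can produce at most $r$ positive eigenvalues. The single point deserving care is precisely this rank bound on the positive-definite subspace of $X - X'$, handled above by the elementary transversality argument with $\text{Ker}(X)$ (or, equivalently, by Weyl's inequality).
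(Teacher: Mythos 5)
Your proof is correct and follows essentially the same route as the paper: both directions reduce to the observation that differences $X-X'$ with $X\in\mathcal{S}_r^n$, $X'\in\mathcal{S}^n$ are exactly the Hermitian matrices with at most $r$ positive eigenvalues. The paper asserts the two key facts (``every such difference has at most $r$ positive eigenvalues'' and ``every matrix with at most $r$ positive eigenvalues is such a difference'') without proof, whereas you supply the details via the positive/negative part decomposition and the transversality (or Weyl) argument --- a welcome elaboration, but not a different approach.
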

\begin{proof}
Consider the set $\Delta:=\{Y-Z:Y\in\mathcal{S}_r^n,\ Z\in\mathcal{S}^n\}$ and note that every $X\in\Delta$ has at most $r$ positive eigenvalues. Furthermore, note that a measurement $M$ is $r$-complete if and only if $\Delta\cap \text{Ker}(M)\setminus\{0\}=\emptyset$. 

Now assume that every $X\in\text{Ker}(M)\setminus\{0\}$ has at least $r+1$ positive eigenvalues. Since every $Y\in\Delta$ has at most $r$ positive eigenvalues we find $Y\notin\text{Ker}(M)\setminus\{0\}$, i.e. $\Delta\cap \text{Ker}(M)\setminus\{0\}=\emptyset$.

Conversely, assume that $M$ is $r$-complete. $\Delta$ clearly contains all matrices with at most $r$ positive eigenvalues and hence $\text{Ker}(M)\setminus\{0\}$ cannot contain an element with $r$ or less positive eigenvalues.

\end{proof}
\begin{remark}
If every nonzero $X\in\text{Ker}(M)$ has at least $r+1$ positive eigenvalues, then every nonzero $X\in\text{Ker}(M)$ also has at least $r+1$ negative eigenvalues since $X\in\text{Ker}(M)$ implies $-X\in\text{Ker}(M)$.
\end{remark}

\section{Reconstruction of Low-Rank Positive Matrices}\label{1}
Our approach relies on \cite{chen2013uniqueness} where a method to construct the null spaces of $r$-complete $m$-measurements for $m=4r(n-r)+n-2r$ is provided. Their construction is based on the ideas of \cite{cubitt2008dimension}, details can be found in Appendix A of \cite{chen2013uniqueness}.

First, we focus on the phase retrieval problem.
\begin{theorem}\label{thm1}
Let
\begin{align*}
G:=\left(e_0e_0^*,\dots,e_{n-1}e_{n-1}^*,\frac{v_1v_1^*}{\|v_1v_1^*\|_2},\frac{\overline{v}_1\overline{v}_1^*}{\|\overline{v}_1\overline{v}_1^*\|_2},\dots,\frac{v_{2n-3}v_{2n-3}^*}{\|v_{2n-3}v_{2n-3}^*\|_2},\frac{\overline{v}_{2n-3}\overline{v}_{2n-3}^*}{\|\overline{v}_{2n-3}\overline{v}_{2n-3}^*\|_2}\right),
\end{align*}
where the $v_i$ are defined in Equation \eqref{eq7}. If $x_1<x_2<\ldots<x_{2n-3}$, then the measurement $M_G$ is $1$-complete.
\end{theorem}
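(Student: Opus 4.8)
The plan is to invoke Proposition \ref{proprank}: it suffices to show that every nonzero $X\in\mathrm{Ker}(M_G)$ has at least two positive eigenvalues. Suppose for contradiction that some nonzero $X\in H(n)$ lies in $\mathrm{Ker}(M_G)$ and has at most one positive eigenvalue; by the remark following Proposition \ref{proprank} it then also has at most one negative eigenvalue, so $X$ has rank at most $2$ and, being traceless-signature-wise of the form $\lambda uu^*-\mu ww^*$ with $\lambda,\mu\ge 0$, it can be written as $X=yy^*-zz^*$ for some $y,z\in\mathbb{C}^n$ (allowing $y$ or $z$ to be zero). The conditions $\mathrm{tr}(e_ie_i^*X)=0$ for $i=0,\dots,n-1$ say that the diagonal of $X$ vanishes, i.e. $|y_i|^2=|z_i|^2$ for every $i$. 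The conditions $\mathrm{tr}(v_kv_k^*X)=0$ and $\mathrm{tr}(\overline v_k\overline v_k^*X)=0$ say $|\langle v_k,y\rangle|^2=|\langle v_k,z\rangle|^2$ and $|\langle \overline v_k,y\rangle|^2=|\langle \overline v_k,z\rangle|^2$ for all $k=1,\dots,2n-3$. So the hypothesis $X\in\mathrm{Ker}(M_G)$ is exactly the statement that the two signals $y$ and $z$ produce identical intensity measurements against the frame $\{e_i\}\cup\{v_k,\overline v_k\}$, and $X=0$ is equivalent to $y,z$ being equal up to a global phase. Hence $M_G$ being $1$-complete is equivalent to: this frame does phase retrieval, i.e. it separates all signals up to a global phase.

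The next step is therefore to prove that the frame $\{e_0,\dots,e_{n-1}\}\cup\{v_1,\overline v_1,\dots,v_{2n-3},\overline v_{2n-3}\}$ is phase-retrievable, and this is where I would appeal to \cite{chen2013uniqueness}. Write $y_i=r_ie^{i\theta_i}$, $z_i=r_ie^{i\phi_i}$ (the moduli already agree from the diagonal constraint). Up to an overall phase we may normalize $\theta_0=\phi_0=0$. The vectors $v_k$ in \eqref{eq7} are essentially Vandermonde vectors twisted by the fixed unimodular weights $e^{i\pi j/(2n)}$ in coordinate $j$; the point of the twist is that no signal can accidentally lie in the "real" locus that would cause ambiguity. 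Expanding $|\langle v_k,y\rangle|^2$ gives $\sum_{j,l}r_jr_l\,x_k^{j+l}e^{i\pi(l-j)/(2n)}e^{i(\theta_j-\theta_l)}$, and equating with the same expression in the $\phi$'s, and doing the same with $\overline v_k$ (which flips the sign of the $\pi(l-j)/(2n)$ phase), one gets, for each $k$, that $\sum_{j,l}r_jr_l\,x_k^{j+l}\big[\cos(\tfrac{\pi(l-j)}{2n}+\theta_j-\theta_l)-\cos(\tfrac{\pi(l-j)}{2n}+\phi_j-\phi_l)\big]=0$ and the analogous identity with the sign of $\pi(l-j)/(2n)$ reversed. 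Collecting powers of $x_k$: for each total degree $d=j+l\in\{0,1,\dots,2n-2\}$ the coefficient is a fixed real number $c_d$, and we have $2n-3$ distinct nonzero values $x_1<\dots<x_{2n-3}$ at which a polynomial of degree $2n-2$ with no constant term (the $d=0$ coefficient involves only $r_0^2$ and $\theta_0=\phi_0$, hence vanishes) and whose $d=2n-2$ term is also constrained, vanishes. A Vandermonde/counting argument then forces all the relevant $c_d=0$.

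From $c_d=0$ for all $d$ one extracts, degree by degree, that $\theta_j-\theta_l\equiv\pm(\phi_j-\phi_l)$ in the appropriate consistent way, and the explicit phases $e^{i\pi j/(2n)}$ are chosen precisely so that the two sign choices cannot be mixed across coordinates: because $\pi j/(2n)\in(0,\pi/2)$ for $1\le j\le n-1$, the shifts never wrap around, and this rigidity propagates the same sign $\pm$ to every coordinate, giving either $\theta_j=\phi_j$ for all $j$ (so $y=z$) or $\theta_j=-\phi_j$ for all $j$ (so $y=\overline z$, but then redoing the argument shows $y$ is proportional to a real vector, a degenerate case handled separately). In all cases $y$ and $z$ agree up to a global phase, so $X=yy^*-zz^*=0$, contradicting $X\ne 0$. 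The main obstacle is this last rigidity step — showing that the single sign ambiguity inherent to phase retrieval is the only one that survives, and that the twisting weights $e^{i\pi j/(2n)}$ genuinely exclude the spurious real-locus solutions; this is exactly the content of the construction in \cite{chen2013uniqueness}, and the bulk of the proof in Section \ref{1} will consist of identifying our $G$ with their null-space construction and verifying its orthogonal complement has the claimed form rather than re-deriving the phase-retrieval property from scratch.
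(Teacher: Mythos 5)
There is a genuine gap at the very first step, and it is precisely the distinction this theorem is about. You reduce ``$1$-complete'' to ``the frame separates signals up to phase'' by arguing that a nonzero $X\in\mathrm{Ker}(M_G)$ with at most one positive eigenvalue must also have at most one negative eigenvalue and hence be of the form $yy^*-zz^*$. The remark after Proposition \ref{proprank} does not give you this: it says that \emph{if} every nonzero kernel element has $\ge r+1$ positive eigenvalues \emph{then} every nonzero kernel element has $\ge r+1$ negative ones (because $-X$ is also in the kernel); it says nothing about an individual $X$ assumed, for contradiction, to have few positive eigenvalues. The correct negation of $1$-completeness (see the proof of Proposition \ref{proprank}) is that some nonzero kernel element lies in $\Delta=\{Y-Z: Y\in\mathcal{S}_1^n,\ Z\in\mathcal{S}^n\}$, i.e.\ has the form $\lambda uu^*-Z$ with $Z\ge 0$ of \emph{arbitrary} rank. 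By restricting to $yy^*-zz^*$ you are only proving injectivity of the intensity map on rank-one matrices, which is the $4n-4$-type discrimination property and is strictly weaker than $1$-completeness; the whole point of the extra $n$ measurements is that the rank-one matrix must be the \emph{unique PSD} matrix consistent with the data, so that the PhaseLift SDP \eqref{sdp} reduces to a feasibility problem. Your argument, even if completed, would not yield the theorem. A secondary issue is that the ``rigidity step'' you identify as the main obstacle is left entirely to \cite{chen2013uniqueness} and is not actually the route the paper takes.

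For comparison, the paper never argues signal-by-signal. It first proves a general kernel-level result (Theorem \ref{thm3}): using Lemma \ref{cor2} one writes down an explicit basis of $\mathrm{Ker}(M_G)$ supported on single antidiagonals, built from totally non-singular matrices, and total non-singularity forces every nonzero kernel element to contain a $2(r+1)\times 2(r+1)$ anti-triangular principal submatrix of the form \eqref{eq6}, which by an interlacing and determinant-sign induction has $r+1$ positive and $r+1$ negative eigenvalues; Proposition \ref{proprank} then gives $r$-completeness against \emph{all} PSD matrices. Theorem \ref{thm1} itself is then a short linear-algebra reduction: one checks that $\mathrm{Span}(G)$ equals the span of the measurement $\tilde{G}$ from Corollary \ref{cor3} (diagonal projectors plus the cosine/sine antidiagonal matrices $X_k,Y_k$), by expanding $v_kv_k^*$ and $\overline v_k\overline v_k^*$ in the $X_j,Y_j$ and inverting a Vandermonde matrix in the $x_k$. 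Since $1$-completeness depends only on $\mathrm{Ker}(M_G)=\mathrm{Span}(G)^\perp$, this finishes the proof. If you want to salvage your outline, the fix is to replace $zz^*$ by an arbitrary $Z\ge 0$ throughout, at which point the pointwise phase-unwinding argument no longer applies and you are pushed back toward the kernel/eigenvalue-count argument of the paper.
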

The proof of this theorem can be found in Section \ref{proofs}.
\begin{remark}
From the proof of this result it is easily seen that the kernel of $M_G$ is independent of the choice of the $x_i$. Thus, for the purpose of robustness, the $x_i$ should be chosen such that the smallest singular value of $M_G$ is maximized.
\end{remark}
Let us next state Theorem \ref{thm0} in a more precise way.
\begin{corollary}(Phase Retrieval via PhaseLift.)\label{cor1}
Let $M$ be a measurement given by Theorem \ref{thm1} and let $x\in\mathbb{C}^n$. If $b=M(xx^*)$, then $xx^*$ is the unique minimizer of the semidefinite program \eqref{sdp}.
\end{corollary}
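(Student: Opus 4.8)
The plan is to deduce Corollary~\ref{cor1} directly from Theorem~\ref{thm1} together with Proposition~\ref{prop1}, so essentially no new work is required. First I would observe that for any $x\in\mathbb{C}^n$ the matrix $xx^*$ is positive semidefinite and has rank at most one, hence $xx^*\in\mathcal{S}_1^n$. Thus the outcome $b=M(xx^*)$ is precisely of the form appearing in the hypothesis of Proposition~\ref{prop1} with $r=1$ and $X=xx^*$. It is also worth noting at this point that the intensity measurements $|\langle v,x\rangle|^2=\mathrm{tr}(vv^*\,xx^*)$ are genuinely linear functionals of $xx^*$, so that phase retrieval for $x$ is literally the rank-one instance of the low-rank positive recovery problem and PhaseLift is exactly the semidefinite program~\eqref{sdp}.

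Next I would invoke Theorem~\ref{thm1}: under the ordering hypothesis $x_1<x_2<\ldots<x_{2n-3}$, the measurement $M=M_G$ built from the standard-basis projectors $e_ie_i^*$ and the normalized rank-one projectors $v_kv_k^*$, $\overline v_k\overline v_k^*$ is $1$-complete. The normalization of the $G_j$ to unit Frobenius norm merely rescales each associated linear functional and therefore leaves $\mathrm{Ker}(M_G)$ unchanged, so the $1$-complete property is unaffected; this is the only point one might want to verify explicitly, and it is immediate since the kernel depends on each $G_j$ only up to a nonzero scalar.

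Finally, applying Proposition~\ref{prop1} with $r=1$, $X=xx^*$ and $b=M(xx^*)$ yields that $xx^*$ is the unique minimizer of the semidefinite program~\eqref{sdp}, which is exactly the assertion of the corollary.

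Since the statement is an immediate specialization of Proposition~\ref{prop1}, there is no real obstacle in this proof: all the substance lies in Theorem~\ref{thm1}, i.e. in showing that the explicit family $G$ of Vandermonde-type vectors (together with the standard basis) produces a $1$-complete measurement, and that argument is carried out in Section~\ref{proofs}.
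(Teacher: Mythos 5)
Your proposal is correct and follows exactly the paper's own route: the paper likewise dismisses the corollary as an immediate consequence of Theorem \ref{thm1} combined with Proposition \ref{prop1} applied to $X=xx^*\in\mathcal{S}_1^n$. Your additional observations (linearity of the lifted intensity measurements and invariance of the kernel under normalization of the frame elements) are accurate but not needed beyond what the paper already asserts.
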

By Proposition \ref{prop1}, this is an immediate consequence of Theorem \ref{thm1}. 

Let us next focus on the recovery of low-rank positive matrices. This, however, requires some further definitions: First, let
\begin{align}\label{cn}
C^n_r:=\{X\in H(n):\text{tr}(Xe_ie_j^*)=0,\ 2r-1\le i+j\le 2(n-r)-1,\ i\neq j,\}.
\end{align}
E.g. $C^n_1\subseteq H(n)$ is the subspace of $n\times n$ diagonal matrices and $C^7_3$ is the subspace of $H(7)$ of the from
\begin{gather*}
\begin{pmatrix}
*&*&*&*&*&0&0\\
*&*&*&*&0&0&0\\
*&*&*&0&0&0&*\\
*&*&0&*&0&*&*\\
*&0&0&0&*&*&*\\
0&0&0&*&*&*&*\\
0&0&*&*&*&*&*
\end{pmatrix}.
\end{gather*}
For $x\in\mathbb{R}\setminus\{0\}$, $k\in\{2r-1,\dots,2(n-r)-1\}$, define the Hermitian matrices $R_k(x),I_k(x)\in (C_r^n)^\bot$ by \footnote{As $R_k(x),I_k(x)\in (C_r^n)^\bot$ both have vanishing diagonal and since they are hermitian, it suffices to define all elements above the diagonal.}
\begin{align*}
(R_k(x))_{jl}:=\delta_{j+l,k}x^j,\ j,l\in\{0,\dots,n-1\},\ j>l,\\
(I_k(x))_{jl}:=i\delta_{j+l,k}x^j,\ j,l\in\{0,\dots,n-1\},\ j>l,
\end{align*}
where $\delta_{i,j}$ denotes the Kronecker delta.
E.g. for $n=5$, $r=2$ these are 
\begin{tiny}
\begin{gather*}
R_3(x)=\begin{pmatrix}
0&0&0&1&0\\
0&0&x&0&0\\
0&x&0&0&0\\
1&0&0&0&0\\
0&0&0&0&0\\
\end{pmatrix},
I_3(x)=\begin{pmatrix}
0&0&0&i&0\\
0&0&ix&0&0\\
0&-ix&0&0&0\\
-i&0&0&0&0\\
0&0&0&0&0\\
\end{pmatrix},
R_4(x)=\begin{pmatrix}
0&0&0&0&1\\
0&0&0&x&0\\
0&0&0&0&0\\
0&x&0&0&0\\
1&0&0&0&0\\
\end{pmatrix},
I_4(x)=\begin{pmatrix}
0&0&0&0&i\\
0&0&0&ix&0\\
0&0&0&0&0\\
0&-ix&0&0&0\\
-i&0&0&0&0\\
\end{pmatrix},\\
R_5(x)=\begin{pmatrix}
0&0&0&0&0\\
0&0&0&0&1\\
0&0&0&x&0\\
0&0&x&0&0\\
0&1&0&0&0\\
\end{pmatrix},
I_5(x)=\begin{pmatrix}
0&0&0&0&0\\
0&0&0&0&i\\
0&0&0&ix&0\\
0&0&-ix&0&0\\
0&-i&0&0&0\\
\end{pmatrix}.
\end{gather*}
\end{tiny}
\begin{theorem}\label{thm2}
Let $G_0$ be a basis of $C_r^n$ and let $x_1,x_2,\dots,x_{r}\in\mathbb{R}\setminus\{0\}$ with  $x_1<x_2<\ldots<x_r$. For $k\in\{2r-1,\dots,2(n-r)-1\}$ define
\begin{align*}
G_k:=(I_k(x_1),R_k(x_1),\dots,I_k(x_r),R_k(x_r)).
\end{align*}
and let $G:=G_0\cup G_{2r-1}\cup \dots\cup G_{2(n-r)-1}$ \footnote{For tuples of Hermitian matrices $X:=(X_1,\dots,X_i)\in H(n)^i$, $Y:=(Y_1,\dots,Y_j)\in H(n)^j$ we define their union $X\cup Y$ to be the tuple $X\cup Y:=(X_1,\dots,X_i,Y_1,\dots,Y_j)\in H(n)^{i+j}$.}. Then the measurement $M_{G}$ is $r$-complete  and $|G|=4r(n-r)+n-2r$.
\end{theorem}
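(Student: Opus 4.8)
The plan is to check the criterion of Proposition~\ref{proprank}, namely that every nonzero $X\in\text{Ker}(M_G)$ has at least $r+1$ positive eigenvalues. Since $\tr(G_iX)=\langle G_i,X\rangle$ for the Hilbert--Schmidt inner product, $\text{Ker}(M_G)=\Span(G)^{\bot}$. As $G_0$ is a basis of $C_r^n$, every $X\in\text{Ker}(M_G)$ lies in $(C_r^n)^{\bot}$ --- that is, $X$ is Hermitian and its nonzero entries occur only at off-diagonal positions $(i,j)$ with $2r-1\le i+j\le 2(n-r)-1$. Write such an $X$ as $X=\sum_k X^{(k)}$, where the Hermitian matrix $X^{(k)}$ collects the entries on the anti-diagonal $i+j=k$. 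Because each of $R_k(x),I_k(x)$ is supported on the single anti-diagonal $i+j=k$ and distinct anti-diagonals are Hilbert--Schmidt orthogonal, it follows that $X\in\text{Ker}(M_G)$ if and only if $X\in(C_r^n)^{\bot}$ and, for every $k\in\{2r-1,\dots,2(n-r)-1\}$, the component $X^{(k)}$ is orthogonal to each of $R_k(x_1),I_k(x_1),\dots,R_k(x_r),I_k(x_r)$.

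Next I would spell out these anti-diagonal conditions. A direct evaluation of $\tr(X^{(k)}R_k(x))$ and $\tr(X^{(k)}I_k(x))$ --- pairing the entry $X_{pq}$ with the corresponding entries of $R_k(x)$ and $I_k(x)$ on the anti-diagonal $i+j=k$ and using $x\in\mathbb{R}$ --- shows that the conditions ``$X^{(k)}\bot R_k(x_j),I_k(x_j)$ for $j=1,\dots,r$'' amount to the single complex linear system
\begin{align*}
\sum_{q}X_{k-q,\,q}\,x_j^{\,q}=0,\qquad j=1,\dots,r,
\end{align*}
where $q$ runs over the consecutive integers with $k/2<q\le\min(k,n-1)$. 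Letting $N_k$ denote the number of such $q$'s, an elementary check using $r\le\lceil n/2\rceil-1$ gives $N_k\ge r$ for every $k$ in the band. Since the $x_j$ are distinct and nonzero and the exponents are consecutive, the coefficient matrix $(x_j^{\,q})_{j,q}$ is a monomial multiple of a Vandermonde matrix, so it has rank $\min(r,N_k)=r$; in particular the $2r$ matrices making up each $G_k$ are linearly independent. Combined with the assumption that $G_0$ is a basis of $C_r^n$, this shows that $G$ is linearly independent and yields
\begin{align*}
|G|=\dim C_r^n+2r\cdot\#\{2r-1,\dots,2(n-r)-1\}=\bigl(n+4r(r-1)\bigr)+2r(2n-4r+1)=4r(n-r)+n-2r,
\end{align*}
where $\dim C_r^n=n+4r(r-1)$ is obtained by counting the off-diagonal positions inside the band.

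It remains to establish the eigenvalue bound. At this point $\text{Ker}(M_G)$ has been pinned down explicitly: it is the space of Hermitian matrices supported on the off-diagonal band $2r-1\le i+j\le 2(n-r)-1$ whose anti-diagonal moments $\sum_q X_{k-q,\,q}\,x_j^{\,q}$ vanish for $j=1,\dots,r$ (indeed the Vandermonde computation forces the extreme anti-diagonals, where $N_k=r$, to vanish entirely). This is precisely one of the null spaces constructed in Appendix~A of \cite{chen2013uniqueness}, following the ideas of \cite{cubitt2008dimension}, and shown there to be the kernel of an $r$-complete measurement --- equivalently, by Proposition~\ref{proprank}, every nonzero matrix in it has at least $r+1$ positive eigenvalues. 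Applying this to $X\in\text{Ker}(M_G)$ and invoking Proposition~\ref{proprank} once more shows that $M_G$ is $r$-complete, and the cardinality formula was verified above.

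I expect the main obstacle to be this last step: correctly identifying the explicitly described $\text{Ker}(M_G)$ with the null-space construction of \cite{chen2013uniqueness} and, in particular, checking that the lower bound on the number of positive eigenvalues proved there uses only the band-plus-vanishing-moments structure, so that it is insensitive to the particular choice of distinct nonzero nodes $x_1<\dots<x_r$. Everything else --- the anti-diagonal decomposition, the Vandermonde non-degeneracy, and the bookkeeping behind the cardinality identity --- is routine once this structure is in hand.
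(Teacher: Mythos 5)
Your reduction to Proposition~\ref{proprank}, the identification $\text{Ker}(M_G)=\Span(G)^{\bot}$, the anti-diagonal decomposition, the translation of the orthogonality conditions into the moment system $\sum_q X_{k-q,q}x_j^q=0$, and the cardinality count are all correct and match the paper's setup. But the heart of the theorem --- that every nonzero $X\in\text{Ker}(M_G)$ has at least $r+1$ positive eigenvalues --- is not proved in your proposal: you hand it off to Appendix~A of \cite{chen2013uniqueness} and yourself flag that you have not checked the hypothesis under which that eigenvalue bound applies. That check is not a formality, and the property it requires is strictly stronger than what you established. The eigenvalue count is driven by the claim that the \emph{first} nonzero anti-diagonal of a kernel element must carry at least $r+1$ nonzero entries in its upper half (so that $X$ contains a $2(r+1)\times 2(r+1)$ principal anti-triangular submatrix with nonzero corners, to which one applies an induction with Cauchy interlacing and a determinant sign computation). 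A nonzero solution of the $r\times N_k$ system $\sum_q X_{k-q,q}x_j^q=0$ supported on at most $r$ positions exists precisely when some (possibly non-consecutive-column) $r\times r$ minor of the coefficient matrix vanishes; ruling this out needs \emph{all} minors of $(x_j^q)$ to be nonzero --- the paper's ``totally non-singular'' property --- whereas your Vandermonde argument only shows the matrix has full rank $r$ (which suffices for linear independence of $G_k$ and the dimension count, but not for the sparsity bound). Generalized Vandermonde minors with non-consecutive exponents are exactly the ones that can degenerate, so this is where the real work sits.

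The paper fills this gap by proving a more general statement (Theorem~\ref{thm3}) for arbitrary totally non-singular $A_k,A'_k$: Lemma~\ref{cor2} produces totally non-singular orthogonal complements $B_k$ spanning the kernel on each anti-diagonal, total non-singularity of $B_k$ forces at least $2(r+1)$ nonzero entries on the first nonzero anti-diagonal of any kernel element (Lemma~9 of \cite{cubitt2008dimension}), and an explicit induction with the interlacing theorem and $\det(Y)\det(Y')<0$ yields the $r+1$ positive and $r+1$ negative eigenvalues. Theorem~\ref{thm2} is then the special case where $A_k,A'_k$ are truncated Vandermonde matrices. To repair your proposal you would need either to reproduce that eigenvalue argument (verifying the total non-singularity of the relevant Vandermonde blocks, not just their rank), or to quote a precise statement from \cite{chen2013uniqueness} whose hypotheses you verify for your explicitly described kernel.
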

\begin{remark}
If an $m$-measurement is injective when restricted to $\mathcal{S}_r^n$, it was shown in \cite{heinosaari2013quantum,kech1} that, up to terms at most logarithmic in $n$, we have $m\ge 4r(n-r)$. Furthermore, in \cite{heinosaari2013quantum,kech2} it was shown that there indeed exist injective $m$-measurements for $m=4r(n-r)$. Thus, it might be worth noting that the measurements given by Theorem \ref{thm2} solely require $n-2r$ additional measurement outcomes as compared to the nearly optimal bound $4r(n-r)$.
\end{remark}
Finally, by Proposition \ref{prop1}, the measurements given by Theorem \ref{thm2} allow for the recovery of low-rank positive matrices.
\begin{corollary}(Recovery of low-rank positive matrices.)
Let $M$ be a measurement given by Theorem \ref{thm2} and let $X\in\mathcal{S}^n_r$.  If $b=M(X)$, then $X$ is the unique minimizer of the semidefinite program \eqref{sdp}.
\end{corollary}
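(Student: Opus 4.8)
The statement will follow by directly chaining the two results that precede it. The plan is: (i) invoke Theorem~\ref{thm2} to conclude that the measurement $M$ in question --- being ``a measurement given by Theorem~\ref{thm2}'', i.e.\ $M = M_G$ for the tuple $G = G_0 \cup G_{2r-1} \cup \dots \cup G_{2(n-r)-1}$ constructed there --- is $r$-complete; (ii) observe that the hypotheses of Proposition~\ref{prop1} are then met verbatim, since $X \in \mathcal{S}^n_r$ and $b = M(X)$; (iii) read off the conclusion of Proposition~\ref{prop1}, namely that $X$ is the unique minimizer of the semidefinite program \eqref{sdp}. That is the whole argument, and it parallels the proof of Corollary~\ref{cor1} (the case $r=1$) line for line.

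If one wants the underlying mechanism in plain sight (rather than as a black-box citation of Proposition~\ref{prop1}), the one extra line to record is this: by the definition of $r$-complete and because $X \in \mathcal{S}^n_r$, the feasible set $\{Y \in H(n) : Y \ge 0,\ M(Y) = b\}$ of \eqref{sdp} coincides with $\{Y \in \mathcal{S}^n : M(Y) = M(X)\} = \{X\}$; since this set is a nonempty singleton (it contains $X$, so the program is feasible), $X$ minimizes $\operatorname{tr}(Y)$ over it uniquely, irrespective of the objective. This is precisely the content of the proof of Proposition~\ref{prop1} specialised to the present $M$.

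There is no genuine obstacle in the corollary itself --- all the difficulty is already spent in Theorem~\ref{thm2}, whose proof relies on explicitly identifying $(C_r^n)^\perp$ and matching the resulting span against the null-space construction of \cite{chen2013uniqueness}. The only things to keep an eye on here are cosmetic: that the tuple $G$ really has cardinality $4r(n-r)+n-2r$ (which Theorem~\ref{thm2} states), and that the set $\mathcal{S}^n_r$ appearing in the corollary is literally the set $\mathcal{S}_r(\mathbb{C}^n)$ used in the statement and proof of Proposition~\ref{prop1}. Once those identifications are made, the three-step chain above closes immediately.
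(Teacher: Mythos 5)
Your proposal is correct and matches the paper exactly: the corollary is stated as an immediate consequence of Theorem~\ref{thm2} (which yields $r$-completeness of $M_G$) combined with Proposition~\ref{prop1}, which is precisely your three-step chain. The extra unpacking of the singleton feasible set is just the proof of Proposition~\ref{prop1} restated and adds nothing that needs checking.
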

\section{Stability}\label{stab}
In this section we discuss the stability of $r$-complete measurements.

Assume there is an error term $E\in H(n)$ that perturbs the matrix $X_r\in\mathcal{S}_r^n$ we intend to recover to the matrix $X=X_r+E$. Measuring with an $r$-complete measurement $M$ yields the perturbed outcome $b=M(X)$. Clearly, the matrix $X_r$ cannot always be perfectly recovered from the outcome $b$, however, if $\|M(E)\|_2$ is small, there is a recovery procedure that yields a matrix close to $X_r$. For that purpose, consider the following well-known optimization problem
\begin{gather}\label{sdp2}
\begin{split}
\text{minimize}\ \text{tr}(Y)\ \ \ \ \ \ \ \ \ \ \ \ \\
\text{subject to}\ Y\geq 0,\ \|M(Y)-b\|_2\leq\epsilon
\end{split}
\end{gather}
where $\epsilon\geq 0$ is a constant representing the error scale.
\begin{theorem}(Stable recovery of low-rank positive matrices.)\label{thmstab}
Let $M$ be an $r$-complete measurement and let $\epsilon>0$. There is a constant $C_M>0$ independent of $\epsilon$ such that for all $X_r\in\mathcal{S}_r^n$ and $E\in H(n)$ with $\|M(E)\|_2\leq\epsilon$, any minimizer $Y$ of \eqref{sdp2} for $b=M(X_r+E)$ satisfies
\begin{align*}
\|Y-X_r\|_2\leq C_M\epsilon.
\end{align*}
\end{theorem}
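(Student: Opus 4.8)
The plan is to argue by contradiction via a normalization-and-compactness argument, with Proposition~\ref{proprank} supplying the crucial spectral rigidity; heuristically, Proposition~\ref{prop1} says that for $\epsilon=0$ the program \eqref{sdp2} has $X_r$ as its unique feasible point, so the statement to be proved is a quantitative robustness version of that. Assume, for contradiction, that no such $C_M$ exists. Then for every $k\in\mathbb{N}$ there are $\epsilon_k>0$, $X_r^{(k)}\in\mathcal{S}_r^n$, $E^{(k)}\in H(n)$ with $\|M(E^{(k)})\|_2\le\epsilon_k$, and a minimizer $Y^{(k)}$ of \eqref{sdp2} for $b^{(k)}=M(X_r^{(k)}+E^{(k)})$ with $\|Y^{(k)}-X_r^{(k)}\|_2>k\epsilon_k$. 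Since $X_r^{(k)}\ge 0$ and $\|M(X_r^{(k)})-b^{(k)}\|_2=\|M(E^{(k)})\|_2\le\epsilon_k$, the matrix $X_r^{(k)}$ is itself feasible for \eqref{sdp2} (only feasibility of $Y^{(k)}$, not its minimality, will be used below), and for $D_k:=Y^{(k)}-X_r^{(k)}$ the triangle inequality gives $\|M(D_k)\|_2\le\|M(Y^{(k)})-b^{(k)}\|_2+\|b^{(k)}-M(X_r^{(k)})\|_2\le 2\epsilon_k$.

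Next I normalize: set $\delta_k:=\|D_k\|_2>k\epsilon_k>0$ and $\widehat D_k:=D_k/\delta_k$, so $\|\widehat D_k\|_2=1$ and $\|M(\widehat D_k)\|_2\le 2\epsilon_k/\delta_k<2/k$. Passing to a subsequence, $\widehat D_k\to\widehat D$ in $H(n)$ with $\|\widehat D\|_2=1$ and $M(\widehat D)=0$, so $\widehat D\in\mathrm{Ker}(M)\setminus\{0\}$. By Proposition~\ref{proprank} and the remark following it, $\widehat D$ has at least $r+1$ strictly negative eigenvalues; let $V_-\subseteq\mathbb{C}^n$ be the span of the corresponding eigenvectors, so $\dim V_-\ge r+1$ and $\langle w,\widehat D w\rangle<0$ for every $w\in V_-\setminus\{0\}$.

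To reach a contradiction I produce a complementary subspace on which $\widehat D$ is positive semidefinite in the limit. Let $P_k$ be the orthogonal projection onto $\ker X_r^{(k)}$; since $\mathrm{rank}\,X_r^{(k)}\le r$, $\mathrm{tr}\,P_k\ge n-r$. From $Y^{(k)}\ge 0$ and $P_kX_r^{(k)}P_k=0$ we get $0\le P_kY^{(k)}P_k=\delta_k P_k\widehat D_kP_k$, hence $P_k\widehat D_kP_k\ge 0$. After a further subsequence, $P_k\to P$, again an orthogonal projection, with $\mathrm{tr}\,P=\lim_k\mathrm{tr}\,P_k\ge n-r$ and $P\widehat D P=\lim_k P_k\widehat D_kP_k\ge 0$ (the positive semidefinite cone being closed). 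Since $\dim\mathrm{Range}(P)+\dim V_-\ge(n-r)+(r+1)>n$, there is a nonzero $w\in\mathrm{Range}(P)\cap V_-$; then $Pw=w$ yields $\langle w,\widehat D w\rangle=\langle w,P\widehat D P w\rangle\ge 0$, contradicting $\langle w,\widehat D w\rangle<0$, which completes the proof.

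The main obstacle, such as it is, is purely bookkeeping with limits of orthogonal projections — above all that $\mathrm{tr}\,P_k$, being integer-valued and convergent, cannot fall below $n-r$ in the limit, and that $P=\lim_k P_k$ remains an orthogonal projection — together with the observation that normalizing $D_k$ by its own norm $\delta_k$ removes the need for any a priori bound on the possibly unbounded sequences $X_r^{(k)}$ and $Y^{(k)}$; the remaining ingredients are the triangle inequality, Proposition~\ref{proprank}, and a dimension count. Being a pure compactness argument, it gives no handle on the size of $C_M$, which is consistent with the paper's remark that this constant cannot currently be estimated.
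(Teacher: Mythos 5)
Your proof is correct, but it takes a genuinely different route from the paper. The paper's argument (Lemma \ref{lemstab}) is direct and quantitative: it splits $Y$ along the orthogonal decomposition $H(n)=\text{Range}(M^*)\oplus\text{Ker}(M)$, controls the $\text{Range}(M^*)$-component by the smallest singular value $\sigma_{min}$ of $M$, shows via the spectral variation bound that the auxiliary matrix $Y^\prime=\pi^\bot(X_r)+\pi(Y)$ has $n-r$ eigenvalues of size $O(\epsilon)$, and then bounds the $\text{Ker}(M)$-component by Weyl's inequality using the quantity $\kappa=-\max_{Z\in\text{Ker}(M),\|Z\|_2=1}\lambda_{n-r}(Z)$, whose positivity is exactly where Proposition \ref{proprank} enters. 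This yields the semi-explicit constant $C_M\leq \frac{2}{\sigma_{min}}(1+\frac{1}{\kappa})$ quoted in the remark after the theorem. Your argument instead negates the statement, normalizes the difference $D_k=Y^{(k)}-X_r^{(k)}$ by its own norm, extracts a unit-norm limit $\widehat D\in\text{Ker}(M)$, and derives a contradiction by playing the $r+1$ negative eigenvalues guaranteed by Proposition \ref{proprank} against the positive semidefiniteness of $P\widehat DP$ on a limit of kernel projections of the $X_r^{(k)}$; the integer-valuedness of $\text{tr}\,P_k$ and the closedness of the set of orthogonal projections and of the PSD cone make the limit bookkeeping sound, and the dimension count $(n-r)+(r+1)>n$ closes the argument. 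What each approach buys: yours is softer and avoids both Weyl's inequality and the spectral variation bound, and it makes transparent that only feasibility (not minimality) of $Y$ is used — a point the paper also exploits in its remark about the alternative recovery scheme. The price is that a pure compactness argument yields no expression for $C_M$ at all, so it proves the theorem as stated but cannot support the accompanying remark identifying $C_M\leq\frac{2}{\sigma_{min}}(1+\frac{1}{\kappa})$, which is tied to the paper's constructive proof.
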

\begin{remark}
In the proof of this theorem we show that $C_M\leq \frac{2}{\sigma_{\min}}(1+\frac{1}{\kappa})$ where $\sigma_{min}$ is the smallest singular value of $M$ and $\kappa:=-\max_{Z\in\text{Ker}(M),\|Z\|_2=1}\lambda_{n-r}(Z)$ \footnote{$\lambda_{n-r}$ is defined later this section.}. However we do not know how to compute $\kappa$ for a given $r$-complete measurement $M$ and hence we cannot make this bound more explicit.
\end{remark}
The proof of this theorem can be found in Section \ref{proofs}. In order for this result to be a practical stability guarantee, one would have to estimate the constant $C_M$. At this point we do not know how this can be achieved. In order to indicate the magnitude of the constant $C_M$, let us next present some numerical results. For this purpose consider the tuple
\begin{align*}
G_n:=\left(e_0e_0^*,\dots,e_{n-1}e_{n-1}^*,\frac{I_1(1)}{\|I_1(1)\|_2},\frac{R_1(1)}{\|R_1(1)\|_2},\dots,\frac{I_{2n-3}(1)}{\|I_{2n-3}(1)\|_2},\frac{R_{2n-3}(1)}{\|R_{2n-3}(1)\|_2}\right)
\end{align*}
and note that by Theorem \ref{thm2} the associated measurement $M_{G_n}$ is $1$-complete. Figure \ref{fig1}  presents numerical results that might indicate the scaling of $C_{M_{G_n}}$ for the sequence of measurements $(M_{G_n})_{n\in\mathbb{N}}$.
\begin{figure}[ht]
\includegraphics[trim=0cm 8cm 1cm 9cm,clip, width=12cm]{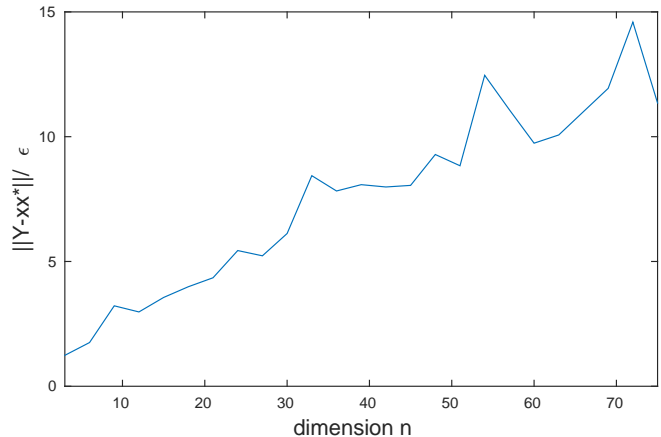}
\centering
\caption{\label{fig1} For each $n\in\{3,6,\dots,75\}$ we choose uniformly at random a normalized vector $x\in\mathbb{C}^{n}$ and an error term $f\in\mathbb{R}^{5n-6}$ with $\|f\|_2\leq\epsilon:=10^{-3}$. Then we run the program \eqref{sdp2} with the outcome $b=M_{G_n}(xx^*)+f$. The figure shows the maximum value of $\|Y-xx^*\|_2/\epsilon$ for $2200$ repetitions where $Y$ is the minimizer of \eqref{sdp2}.}
\end{figure}

Just like in \cite{candes2013phaselift}, this recovery scheme can also be used for the phase retrieval problem. For a Hermitian matrix $A\in H(n)$, we denote by $\text{Eig}(A)\in\R^n$ the tuple of eigenvalues of $A$ ordered decreasingly together with their multiplicities. Furthermore, we define $\lambda_i(A):=\text{Eig}(A)_{i-1},\ i\in\{1,\dots,n\}$.

\begin{proposition}(Stability for Phase Retrieval.)\label{propstab}
Let $X=xx^*+E$, where $x\in\mathbb{C}^n$ is the signal and $E\in H(n)$ is an error term. Let $M$ be a $1$-complete measurement and let $\epsilon\geq \|M(E)\|_2$. Furthermore, let $Y$ be any minimizer of the optimization problem \eqref{sdp2} for $b=M(X)$ and set $\hat{x}:=\sqrt{\lambda_1(Y)}x^\prime$ where $x^\prime\in S^{n-1}$ is an eigenvector of $Y$ with eigenvalue $\lambda_1(Y)$. Then
\begin{align*}
\|xx^*-\hat{x}\hat{x}^*\|_2\leq 2C_M\epsilon,
\end{align*}
where $C_M$ is the constant given by Theorem \ref{thmstab}.
Furthermore, for some $\varphi\in[0,2\pi)$ we have
\begin{align*}
\|x-e^{i\varphi}\hat{x}\|_2\leq \frac{2\sqrt{2}C_M}{\|x\|_2}\epsilon.
\end{align*}
\end{proposition}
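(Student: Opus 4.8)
The plan is to derive the bound in two stages: first control $\|xx^*-\hat x\hat x^*\|_2$ using Theorem \ref{thmstab}, then pass from the rank-one matrices to the vectors themselves. For the first part, note that $xx^* \in \mathcal{S}_1^n$ and $\|M(E)\|_2 \le \epsilon$, so Theorem \ref{thmstab} with $r=1$ applies directly to any minimizer $Y$ of \eqref{sdp2} for $b = M(xx^*+E)$, giving $\|Y - xx^*\|_2 \le C_M\epsilon$. Now $\hat x\hat x^* = \lambda_1(Y)\, x'x'^*$ is exactly the best rank-one positive semidefinite approximation to $Y$ in Frobenius norm (it is the truncation of the spectral decomposition of $Y$ to its top eigenvalue, and $Y \ge 0$). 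Since $xx^*$ is itself a rank-one positive semidefinite matrix, optimality of the best rank-one approximation gives $\|Y - \hat x\hat x^*\|_2 \le \|Y - xx^*\|_2 \le C_M\epsilon$. The triangle inequality then yields
\begin{align*}
\|xx^* - \hat x\hat x^*\|_2 \le \|xx^* - Y\|_2 + \|Y - \hat x\hat x^*\|_2 \le 2C_M\epsilon .
\end{align*}

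For the second part I would invoke a standard perturbation inequality relating the distance between two vectors (up to global phase) and the distance between their associated rank-one projectors. Concretely, for any $u,w\in\mathbb{C}^n$ one has $\min_{\varphi}\|u - e^{i\varphi}w\|_2^2 \le \frac{\sqrt 2}{\|u\|_2}\,\|uu^* - ww^*\|_2$ — this follows by writing $\|uu^* - ww^*\|_2^2 = \|u\|_2^4 + \|w\|_2^4 - 2|\langle u,w\rangle|^2$, expanding $\min_\varphi\|u - e^{i\varphi}w\|_2^2 = \|u\|_2^2 + \|w\|_2^2 - 2|\langle u,w\rangle|$, and bounding the latter in terms of the former. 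Applying this with $u = x$, $w = \hat x$ and using the first part gives
\begin{align*}
\|x - e^{i\varphi}\hat x\|_2 \le \Big(\tfrac{\sqrt 2}{\|x\|_2}\, 2C_M\epsilon\Big)^{1/2}
\end{align*}
for the optimal $\varphi$ — but this is a square-root dependence, not the claimed linear one, so a cruder bound is not enough. To get the linear bound one must exploit that $\hat x\hat x^*$ is close to $xx^*$ at scale $\epsilon$, so $\|\hat x\|_2$ is close to $\|x\|_2$ and $\langle x,\hat x\rangle$ is close to $\|x\|_2^2$; then $\|x\|_2^2 + \|\hat x\|_2^2 - 2|\langle x,\hat x\rangle| = (\|x\|_2 - \|\hat x\|_2)^2 + 2(\|x\|_2\|\hat x\|_2 - |\langle x,\hat x\rangle|)$, and each term is $O(\epsilon^2)$ because $\|uu^* - ww^*\|_2^2$ itself controls $(\|u\|_2^2-\|w\|_2^2)^2$ and $\|u\|_2^2\|w\|_2^2 - |\langle u,w\rangle|^2$ separately. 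Dividing through by $\|x\|_2^2$ and taking square roots then produces the linear bound $\|x - e^{i\varphi}\hat x\|_2 \le \frac{2\sqrt2\, C_M}{\|x\|_2}\epsilon$ after tracking constants.

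The main obstacle is the second step: getting a bound that is genuinely \emph{linear} in $\epsilon$ rather than $\sqrt\epsilon$. The naive projector-to-vector inequality loses a square root, and one has to argue more carefully — separating the norm-discrepancy contribution from the angular contribution, and observing that the Frobenius distance $\|xx^* - \hat x\hat x^*\|_2$ already dominates both $|\,\|x\|_2^2 - \|\hat x\|_2^2\,|$ and $\sqrt{\|x\|_2^2\|\hat x\|_2^2 - |\langle x,\hat x\rangle|^2}$. The bookkeeping of the numerical constant ($2\sqrt2$) also requires some care, but is routine once the right decomposition is in place.
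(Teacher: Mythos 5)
Your first step is correct and in fact takes a cleaner route than the paper. The paper bounds $\|Y-\hat{x}\hat{x}^*\|_2=\sqrt{\sum_{i\ge 2}\lambda_i(Y)^2}$ by re-using an eigenvalue estimate from inside the proof of Lemma \ref{lemstab}, whereas you invoke Eckart--Young: $\hat{x}\hat{x}^*$ is the best Frobenius approximation of $Y$ among matrices of rank at most one, and $xx^*$ has rank at most one, so $\|Y-\hat{x}\hat{x}^*\|_2\le\|Y-xx^*\|_2\le C_M\epsilon$. That is a fully rigorous and self-contained way to get the factor $2$ in $\|xx^*-\hat{x}\hat{x}^*\|_2\le 2C_M\epsilon$.

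The second step, however, is not closed. Writing $a=\|x\|_2^2$, $b=\|\hat{x}\|_2^2$, $c=|\langle x,\hat{x}\rangle|$ and $\delta=\|xx^*-\hat{x}\hat{x}^*\|_2$, your decomposition gives $\min_\varphi\|x-e^{i\varphi}\hat{x}\|_2^2=(\sqrt{a}-\sqrt{b})^2+2(\sqrt{ab}-c)$. The first term is indeed at most $\delta^2/a$, since $(a-b)^2\le\delta^2$. But the second term converts to $2(ab-c^2)/(\sqrt{ab}+c)$, and while $ab-c^2\le\delta^2/2$, the denominator $\sqrt{ab}+c$ cannot in general be lower-bounded by $a$ (it can be arbitrarily small relative to $a$ when $\|\hat{x}\|_2\ll\|x\|_2$, e.g. $\hat{x}$ short and nearly orthogonal to $x$). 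So "dividing through by $\|x\|_2^2$" does not follow termwise, and recovering the stated constant $2\sqrt{2}$ this way would at best require an additional case analysis. The paper avoids the decomposition entirely with one multiplicative trick: since $\|x\|_2^2\le a+b+2c$, one has $(a+b-2c)\,a\le(a+b-2c)(a+b+2c)=(a+b)^2-4c^2=\delta^2+2(ab-c^2)\le 2\delta^2$ (the last step is $2ab\le a^2+b^2$), whence $\|x-e^{i\varphi}\hat{x}\|_2^2\le 2\delta^2/\|x\|_2^2\le 2(2C_M\epsilon)^2/\|x\|_2^2$, which is exactly the claimed bound. This difference-of-squares step is the missing idea in your sketch.
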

This result follows from Theorem \ref{thmstab} by a straightforward computation. The proof is given in Section \ref{proofs}.
\begin{remark}
The proofs of \ref{lemstab} shows that the above stability results also hold true the following recovery scheme:
\begin{gather}
\begin{split}
\text{minimize}\ \|M(Y)-b\|_2\\
\text{subject to}\ Y\geq 0,\ \ \ \ 
\end{split}
\end{gather}
where $M$ is $r$-complete and $b=M(X_r+E)$, $X_r\in\mathcal{S}_r^n$.
\end{remark}

\section{Technical Appendix}\label{proofs}
Let us first introduce some notation we use throughout this section.  Let $A\in M(n,q)$, $i\in\{0,\dots,n-1\},\ j\in\{0,\dots,q-1\}$. By $A_{,i}$ we denote the $(n-1)\times q$ matrix obtained from $A$ by deleting the $i$-th row and by $A^{,j}$ we denote the $n\times (q-1)$ matrix obtained from $A$ by deleting the $j$-th column. By $A\{i\}$ we denote the $i$-th row of $A$ and by $A[j]$ we denote the $j$-th column of $A$. Furthermore, for $k\in\{0,\dots,n+q-2\}$, we denote the $k$-th anti-diagonal of $A$ by $A(k)$, i.e. $A(k):=(A_{ij})_{i+j=k}$ \footnote{The ordering is such that the matrix element with smaller $i$ comes first.}. 
\subsection{Proof of Theorem \ref{thm2}}
Since Theorem \ref{thm1} is obtained by manipulating the measurements obtained from Theorem \ref{thm2} we begin by proving the latter. The construction we give in the following yields a more general class of $r$-complete measurements than the ones given by Theorem \ref{thm2} and it strongly relies on the notion of totally non-singular matrices.
\begin{definition}(Totally non-singular.)
A matrix $A\in M(n,q)$ is called totally non-singular if $A$ has no vanishing minor.
\end{definition}
The following lemma is a central ingredient for the construction given in the following.
\begin{lemma}\label{lem1}
Let $q\in\{1,\dots,n-1\}$ and let $A\in M(n,q)$ be totally non-singular. Then, there exists a totally non-singular matrix $B\in M(n,n-q)$ such that $A^*B=0$.
\end{lemma}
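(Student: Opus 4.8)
The plan is to produce $B$ as an explicit Vandermonde-type matrix and verify the two required properties (that $A^*B=0$ and that $B$ is totally non-singular) separately. The key realization is that total non-singularity is a genericity condition, so the natural strategy is: first show that \emph{some} matrix $B$ with $A^*B=0$ and full rank $n-q$ exists (this is immediate: $\ker(A^*)$ has dimension $n-q$ since $A$ has rank $q$, so pick any basis and assemble it as the columns of $B$); then argue that the totally-non-singular ones form a dense (in fact Zariski-open and nonempty) subset of all such $B$, and hence one exists.

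First I would set up the parametrization. Fix a matrix $B_0 \in M(n, n-q)$ whose columns form a basis of $\ker(A^*)$. Any other matrix with columns in $\ker(A^*)$ is of the form $B = B_0 T$ for $T \in M(n-q, n-q)$, and $B$ has full column rank iff $T$ is invertible. A minor of $B = B_0 T$ indexed by a row set $S$ (with $|S| = k \le n-q$) and a column set is, by the Cauchy–Binet formula, a polynomial in the entries of $T$; specifically it is $\sum$ over $k$-subsets $U$ of the columns of $\det((B_0)_{S,U}) \det(T_{U, \text{cols}})$. The point is that this polynomial in the entries of $T$ is \emph{not identically zero} — provided $B_0$ itself has at least one nonvanishing $k\times k$ minor in the rows $S$, which holds because $B_0$ has rank $n-q \ge k$. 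So each minor of $B$, viewed as a polynomial in $T$, is a nonzero polynomial; the finite intersection of the complements of their zero sets is a nonempty Zariski-open set, and any $T$ in it (which is automatically invertible, since $\det T$ is among the relevant minors when $k = n-q$) yields a totally non-singular $B$.

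The main obstacle — and the step that needs genuine care — is verifying that for \emph{every} row subset $S$, the matrix $B_0$ restricted to those rows still has a nonvanishing $|S|\times|S|$ minor. This is where the hypothesis that $A$ is totally non-singular and that $q \le n-1$ gets used: it is \emph{not} true for a generic rank-$(n-q)$ matrix that every collection of $k \le n-q$ rows is linearly independent, but it \emph{is} forced here because $\ker(A^*)$ is the orthogonal complement of the column space of $A$, and the total non-singularity of $A$ controls which coordinate subspaces meet that column space. Concretely, I would argue: if some $k$ rows of $B_0$ were linearly dependent, there would be a nonzero vector $w \in \mathbb{C}^{n-q}$ — no wait, the right statement is that the rows of $B_0$ indexed by a set $S$ of size $k \le n-q$ span a space of dimension $< k$ iff there is a nonzero functional vanishing on those rows, i.e. iff $\ker(A^*)$ is contained in a coordinate hyperplane arrangement in a degenerate way; one translates this into the statement that some $k$ columns of $A$ (a submatrix of size... ) — this requires unwinding the duality between $A$ and $B_0 = $ "annihilator of $A$". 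The clean way: a set of rows $S$ of $B_0$ is linearly independent iff the projection $\pi_S : \ker(A^*) \to \mathbb{C}^{|S|}$ is injective, iff $\ker(A^*) \cap \{x : x_i = 0 \ \forall i \in S\} = \{0\}$, iff the coordinate subspace $\mathbb{C}^{S^c}$ (vectors supported off $S$) intersects $\ker(A^*)$ only at $0$. Now $\ker(A^*) \cap \mathbb{C}^{S^c} = \{0\}$ fails exactly when some nonzero vector supported on $S^c$ is orthogonal to all columns of $A$, i.e. when the rows of $A$ indexed by $S^c$ are linearly dependent — and since $|S^c| = n - |S| \ge n - (n-q) = q$, and $A$ has $q$ columns and is totally non-singular (so any $q$ of its rows are independent), we are fine: we need $|S^c| \le q$ to invoke this directly, i.e. $|S| \ge n - q$, so this handles the top-size minors. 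For smaller $k$ one uses that a submatrix of a totally non-singular matrix is totally non-singular, reducing to the same argument, or simply notes that if the full-size ($k = n-q$) minors' nonvanishing locus is handled, any $T$ there already makes $B$ invertible-transformed and then the genericity argument of the previous paragraph, applied row-set by row-set using that $B_0$ has full rank, closes it.

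So the skeleton is: (1) produce $B_0$ spanning $\ker(A^*)$, full rank $n-q$; (2) for each row set $S$ of size $k \le n-q$, show $(B_0)_{S,\bullet}$ has rank $k$, using the orthogonality to $A$'s columns and the total non-singularity of $A$ together with $q \le n-1$; (3) invoke Cauchy–Binet to see that the minors of $B = B_0 T$ are nonzero polynomials in $T$; (4) conclude that a generic invertible $T$ gives a totally non-singular $B$, and $A^* B = A^* B_0 T = 0$. I expect step (2) to be the crux; everything else is bookkeeping. An alternative, more constructive route — given the rest of the paper works with Vandermonde matrices $v_k$ — would be to write down $B$ explicitly via Vandermonde/Cauchy matrices with suitably chosen parameters and compute the relevant minors directly, but the genericity argument is cleaner to present and suffices for an existence statement.
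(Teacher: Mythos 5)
Your proposal is correct, but it takes a genuinely different route from the paper. The paper proves the lemma by induction on $n$ (for fixed $q$): the base case $n=q+1$ reduces total non-singularity of a vector to the non-vanishing of its entries, and the induction step builds $B$ from the matrices supplied by the induction hypothesis by adding one perturbation parameter at a time, each time invoking the fact that a non-trivial univariate polynomial has finitely many roots. Your argument replaces this entire induction by a single global genericity statement: parametrize all candidates as $B=B_0T$ with $\mathrm{col}(B_0)=\ker(A^*)$, expand each minor of $B_0T$ by Cauchy--Binet as a polynomial in the entries of $T$, and observe that each such polynomial is non-zero because the minors $\det(T_{U,C})$ for distinct $U$ involve disjoint sets of monomials and at least one coefficient $\det((B_0)_{S,U})$ is non-zero. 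You correctly isolate the crux, which is that \emph{every} $k$ rows of $B_0$ ($k\le n-q$) must be linearly independent, and your duality argument for $k=n-q$ is exactly right: a non-trivial dependence among $n-q$ rows of $B_0$ would produce a non-zero vector of $\ker(A^*)$ supported on a set $S^c$ of size $q$, hence a singular $q\times q$ submatrix of $A$. Two small points should be tightened in a write-up: the reduction of $k<n-q$ to $k=n-q$ is simply that a subset of a linearly independent set of rows is linearly independent (your phrasing via "submatrices of totally non-singular matrices" is not the right mechanism, and your earlier parenthetical claim that rank $n-q$ alone forces every row subset of $B_0$ to be independent is false, as you yourself then note); and "injective" should be "surjective" for the projection $\pi_S$, though the two coincide in the only case where you use the equivalence, $|S|=n-q$. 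With those repairs your proof is complete and, to my mind, cleaner and shorter than the paper's; what the paper's one-parameter-at-a-time construction buys in exchange is that it never needs Cauchy--Binet or the multivariate Zariski-density argument, only the finiteness of roots of univariate polynomials.
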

\begin{proof}
We give a proof by induction in the dimension $n$ for $q$ fixed. 

\textbf{Base case.} Let us begin with the base case $n=q+1$. Note that for a given $A\in M(q+1,q)$ there always exists a nonzero matrix (actually just a vector) $B\in M(q+1,1)$ such that $A^*B=0$, in particular if $A$ is totally non-singular.

Since $B$ exists, it is enough to prove that if $A$ is totally non-singular $B$ is totally non-singular as well: Assume for a contradiction that $B$ is not totally non-singular, i.e. that $B$ has a vanishing entry. By permuting rows we can assume $A$ and $B$ to be of the form 
\begin{align*}
A=\begin{pmatrix} F \\ D \end{pmatrix},\ \ B=\begin{pmatrix} 0 \\ E \end{pmatrix}
\end{align*}
for some matrices $F\in M(1,q)$, $D\in M(q,q)$ and $E\in M(q,1)$. But then
\begin{align*}
A^*B=F^*0+D^*E=D^*E=0.
\end{align*} 
In particular this implies that the $q\times q$ submatrix $D$ of $A$ is singular, contradicting the fact that $A$ is totally non-singular by assumption.

\textbf{Induction step.} Assume the claim holds for an $n>q$ and let $A\in M(n+1,q)$ be totally non-singular. Note that for each $i\in\{0,\dots,n\}$, the $n\times q$ matrix $A_{,i}$ is totally non-singular since $A$ is totally non-singular. Thus, by the induction hypothesis, we can find for each $i\in\{0,\dots,n\}$ a totally non-singular matrix $C_i\in M(n,n-q)$ such that $A_{,i}^*C_i=0$. For $i\in\{0,\dots,n\},j\in\{0,\dots,n-q\}$ let $C(i,j)\in M(n+1,n+1-q)$ be the matrix with $C(i,j)_{,i}^{,j}=C_i$ and $0$ else. Then, for all $i\in\{0,\dots,n\},j\in\{0,\dots,n-q\}$,  $C(i,j)_{,i}^{,j}$ is totally non-singular, $C(i,j)[j]=0$ and $A^*C(i,j)=0$ by construction.

\textbf{Step 1.} First, for each $i\in\{0,\dots,n\}$, we deform $C(i,0)$ into a matrix $\tilde{C}(i,0)\in M(n+1,n+1-q)$ with the following properties:
\begin{itemize}
\item[1.] $A^*\tilde{C}(i,0)=0$,
\item[2.] $\tilde{C}(i,0)_{,i}^{,0}$ is totally non-singular,
\item[3.] All $(n+1-q)\times (n+1-q)$ minors of $\tilde{C}(i,0)_{,i}$ are nonzero.
\end{itemize}  

Let $i\in\{0,\dots,n\}$. For $\sigma:=(k_0,\dots,k_{n-q})\in \Sigma:=\{(l_0,\dots,l_{n-q}):0\leq l_0<\ldots<l_{n-q}\leq n-1\}$ define the projection $P_\sigma:M(n+1,n+1-q)\to M(n+1-q,n+1-q)$ by $P_\sigma(X)\{j\}:=(X_{,i})\{k_j\}$ for all $X\in M(n+1,n+1-q),j\in\{0,\dots,n-q\}$. Now let $\sigma\in\Sigma$, and set $E_\sigma:=P_\sigma(C(i,0))$. By permuting rows we can assume $A$ and $C(i,0)$ to be of the form
\begin{align}\label{decomp}
A=\begin{pmatrix} F \\ D \end{pmatrix},\ \ C(i,0)=\begin{pmatrix} E_\sigma \\ F_\sigma \end{pmatrix}
\end{align}
for some matrices $F\in M(n+1-q,q)$, $D\in M(q,q)$ and $F_\sigma\in M(q,n+1-q)$.

Next, we show that there is a vector $u_\sigma=\begin{pmatrix} v_\sigma \\ w_\sigma \end{pmatrix}$ \footnote{The direct sum decomposition of $u_\sigma$ is with respect to the decomosition given by Equation \eqref{decomp}, i.e. $A^*u_\sigma=F^*v_\sigma+D^*w_\sigma$.}, where $v_\sigma\in\mathbb{C}^{n+1-q}$, $w_\sigma\in\mathbb{C}^{q}$, such that $A^*u_\sigma=0$ and $\det(E_\sigma+P_\sigma(u_\sigma e_{0}^*))=\det(E_\sigma+v_\sigma e_{0}^*)\neq 0$: Since $C_i$ is totally non-singular, $E_\sigma$ has rank $n-q$. Thus we can find a vector $v_\sigma\in\mathbb{C}^{n+1-q}$ such that $\det(E_\sigma+v_\sigma e_{0}^*)\neq 0$ \footnote{Note that $E_\sigma[0]=0$ by construction of $C(i,0)$.}. Finally, we just have to ensure that $A^*u_\sigma =0$. Since $D$ is totally non-singular there is a vector $w_\sigma\in\mathbb{C}^{q}$ such that $D^*w_\sigma=-F^*v_\sigma$ and this gives $A^*\begin{pmatrix} v_\sigma \\ w_\sigma \end{pmatrix}=F^*v_{\sigma}+D^*w_\sigma=0$. Repeating this construction, we can find a collection of vectors $\{u_\sigma\}_{\sigma\in\Sigma}\subseteq\mathbb{C}^{n+1}$ such that for all $\sigma\in\Sigma$ we have $A^*u_\sigma=0$ and $\det\big(P_\sigma(C(i,0)+u_\sigma e_{0}^*)\big)\neq 0$.

Next, for distinct $\sigma_1,\sigma_2\in\Sigma$, define the mapping $K(\lambda):=C(i,0)+u_{\sigma_1}e_{0}^*+\lambda u_{\sigma_2}e_{0}^*$, $\lambda\in\mathbb{C}$ and note that by construction $A^*K(\lambda)=0$ for all $\lambda\in\mathbb{C}$. Note that by construction $K(\lambda)_{,i}^{,0}=C_i$ is totally non-singular for all $\lambda$. Furthermore, the $(n+1-q)\times (n+1-q)$ minors $\text{det}\big(P_{\sigma_1}(K(\lambda))\big)$ and $\text{det}\big(P_{\sigma_2}(K(\lambda))\big)$ can be considered as polynomials in $\lambda$. The polynomial equations $\text{det}\big(P_{\sigma_1}(K(\lambda))\big)=0$ and $\text{det}\big(P_{\sigma_2}(K(\lambda))\big)=0$ are non-trivial: For $\lambda=0$ we have $\text{det}\big(P_{\sigma_1}(K(0))\big)=\text{det}\big(P_{\sigma_1}(C(i,0)+u_{\sigma_1} e_{0}^*)\big)\neq 0$ by construction. For $\lambda$ large one can consider $\frac1{\lambda}u_{\sigma_1}e_{0}^*$ as a small perturbation to $u_{\sigma_2}e_{0}^*$. Thus, using linearity of the determinant in the $0$-th column,  we conclude that
\begin{align*}
\text{det}\big(P_{\sigma_2}(K(\lambda))\big)&=\text{det}\big(P_{\sigma_2}(C(i,0)+u_{\sigma_1}e_{0}^*+\lambda u_{\sigma_2}e_{0}^*)\big)\\
&=\lambda\cdot\text{det}\big(P_{\sigma_2}(C(i,0)+u_{\sigma_2}e_{0}^*)+\frac1{\lambda} P_{\sigma_2}(u_{\sigma_1}e_{0}^*)\big)\neq 0
\end{align*}
for large enough $\lambda$ by the continuity of the determinant and the fact that $\text{det}\big(P_{\sigma_2}(C(i,0)+u_{\sigma_2}e_{0}^*)\big)\neq 0$ by construction. A non-trivial polynomial equation in one variable just has a finite set of solutions and hence the set 
\begin{gather*}
\{\lambda\in\mathbb{C}:\text{det}\big(P_{\sigma_1}(K(\lambda))\big)=0\vee \text{det}\big(P_{\sigma_2}(K(\lambda))\big)=0\}\\
=\{\lambda\in\mathbb{C}:\text{det}\big(P_{\sigma_1}(K(\lambda))\big)=0\}\cup \{\lambda\in\mathbb{C}:\text{det}\big(P_{\sigma_2}(K(\lambda))\big)=0\}
\end{gather*}
is finite. In particular there is an $a_{\sigma_2}\in\mathbb{C}$ such that $\text{det}\big(P_{\sigma_1}(K(a_{\sigma_2}))\big)\neq 0$ and $\text{det}\big(P_{\sigma_2}(K(a_{\sigma_2}))\big)\neq 0$ \footnote{In fact this holds for almost all $a_{\sigma_2}\in\mathbb{C}$.}. Applying the same argument to $L(\lambda):=C(i,0)+u_{\sigma_1}e_{0}^*+a_{\sigma_2} u_{\sigma_2}e_{0}^*+\lambda u_{\sigma_3}e_{0}^*$, $\lambda\in\mathbb{C}$, where $\sigma_3\in\Sigma$ is distinct from $\sigma_1,\sigma_2$, yields an $a_{\sigma_3}\in\mathbb{C}$ such that $\text{det}\big(P_{\sigma_1}(L(a_{\sigma_3}))\big)\neq 0$, $\text{det}\big(P_{\sigma_2}(L(a_{\sigma_3}))\big)\neq 0$ and $\text{det}\big(P_{\sigma_3}(L(a_{\sigma_3}))\big)\neq 0$ \footnote{Also in this case we obtain a finite set of non-trivial polynomial equations in $\lambda$ and thus the argument given before can be applied to find $a_{\sigma_3}$.}. Finally, since $|\Sigma|$ is finite, we can inductively apply the argument to obtain a matrix $\tilde{C}(i,0)=C(i,0)+u_{\sigma_1}e_{0}^*+\sum_{\sigma\in\Sigma,\sigma\neq\sigma_1}a_\sigma u_{\sigma}e_{0}^*$ with the desired properties.

\textbf{Step 2.} 
Secondly, we construct for each $i\in\{0,\dots,n\}$ a matrix $D_i\in M(n+1,q)$ with the following properties:
\begin{itemize}
\item[1.] $A^*D_i=0$.
\item[2.] $(D_i)_{,i}$ is totally non-singular.
\end{itemize}  
Let $i\in\{0,\dots,n\}$. Let $D_i(\lambda_1,\dots,\lambda_{n-q}):=\tilde{C}(i,0)+\sum_{j=1}^{n-q}\lambda_{j}C(i,j)$ where $\lambda_{j}\in\mathbb{C}$, $j\in\{1,\dots,n-q\}$, and note that by construction we have $A^*D_i(\lambda_1,\dots,\lambda_{n-q})=0$ for all $\lambda_1,\dots,\lambda_{n-q}\in\mathbb{C}$. By choosing  $(\lambda_{1},\hdots,\lambda_{n-q})$ appropriately one can make sure that $(D_i(\lambda_1,\dots,\lambda_{n-q}))_{,i}$ is totally non-singular: First let $G(\lambda):=\tilde{C}(i,0)+\lambda C(i,1),\ \lambda\in\mathbb{C}$. Just like in Step 1, the minors of $G(\lambda)_{,i}^{,0}$ and $G(\lambda)_{,i}^{,1}$ together with the $(n+1-q)\times(n+1-q)$ minors of $G(\lambda)_{,i}$ yield a finite set of polynomial equations in $\lambda$. All of these polynomial equations are non-trivial: For $\lambda=0$ none of the minors of $G(0)_{,i}^{,0}=C_i$ and none of the $(n+1-q)\times(n+1-q)$ minors of $G(0)_{,i}=\tilde{C}(i,0)_{,i}$ vanish by construction of $\tilde{C}(i,0)$. For large $\lambda$ one can consider $\frac1{\lambda}\tilde{C}(i,0)$ as a small perturbation to $C(i,1)$. Hence, for large enough $\lambda$, none of the minors of $\frac1{\lambda}G(\lambda)_{,i}^{,1}$ vanishes by the fact that $C(i,1)_{,i}^{,1}=C_i$ is totally non-singular by construction and the continuity of the minors. Thus, just like in Step 1, we conclude that there are just finitely many values of $\lambda$ for which any of these polynomials vanishes. In particular there is an $\lambda_1\in\mathbb{C}$ such that both $G(\lambda_1)_{,i}^{,0}$ and $G(\lambda_1)_{,i}^{,1}$ are totally non-singular and all $(n+1-q)\times(n+1-q)$ minors of $G(\lambda_1)_{,i}$ are nonzero. Applying the same argument to $H(\lambda):=\tilde{C}(i,0)+\lambda_1C(i,1)+\lambda C(i,2),\ \lambda\in\mathbb{C},$ yields an $\lambda_2\in\mathbb{C}$ such that $H(\lambda_2)_{,i}^{,0}$, $H(\lambda_2)_{,i}^{,1}$ and $H(\lambda_2)_{,i}^{,2}$ are totally non-singular and all $(n+1-q)\times(n+1-q)$ minors of $H(\lambda_2)_{,i}$ are nonzero. Choosing the values for $\lambda_j$, $j\in\{1,\dots,n-q\}$, inductively in this fashion finally yields a matrix $D_i$ with the desired properties.

\textbf{Step 3.} To complete the induction step we choose by a similar argument as in Step 1 and Step 2 before parameters $\gamma_j\in\mathbb{C},\ j\in\{1,\dots,n\},$ in $B:=D_0+\sum_{j=1}^{n}\gamma_j D_j$ such that $B_{,i}$ is totally non-singular for each $i\in\{0,\dots,n\}$, i.e. such that $B$ is totally non-singular: First define $I(\lambda):=D_0+\lambda D_1$, $\lambda\in\mathbb{C}$. Clearly $I(0)_{,0}=(D_0)_{,0}$ is totally non-singular by construction of $D_0$. Furthermore, for large $\lambda$ , $\frac1{\lambda}D_0$ can be considered as a small perturbation to $D_1$. Thus, for $\lambda$ large enough, $\frac1{\lambda}I(\lambda)_{,1}$ is totally non-singular by construction of $D_1$ and the continuity of the minors. Hence, all the minors of $I(\lambda)_{,0}$ and $I(\lambda)_{,1}$ yield non-trivial polynomial equations in $\lambda$ and therefore there are just finitely many values for $\lambda$ for which any of these minors vanishes. In particular there is a $\gamma_1\in\mathbb{C}$ such that both $I(\gamma_1)_{,0}$ and $I(\gamma_1)_{,1}$ are totally non-singular. Applying the same argument to $J(\lambda):=D_0+\gamma_1D_2+\lambda D_2$ yields a $\gamma_2\in\mathbb{C}$ such that $J(\gamma_2)_{,0}$, $J(\gamma_2)_{,1}$ and $J(\gamma_2)_{,2}$ are totally non-singular. Continuing to choose the $\gamma_i$, $i\in\{1,\dots,n\}$, inductively in this fashion then yields a totally non-singular matrix $B$ with $A^*B=0$.
\end{proof}
\begin{lemma}\label{cor2}
Let $q\in\{1,\dots,n-1\}$ and let $A\in M(n,q,\mathbb{R})$ be totally non-singular. Then, there exists a totally non-singular matrix $B\in M(n,n-q,\mathbb{R})$ such that $A^tB=0$.
\end{lemma}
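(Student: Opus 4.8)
The plan is to reduce the real statement to Lemma \ref{lem1} by complexification. Given a real totally non-singular $A\in M(n,q,\mathbb{R})$, regard it as an element of $M(n,q)$; since $A$ is real, $A^*=A^t$, so Lemma \ref{lem1} produces a complex totally non-singular $B'\in M(n,n-q)$ whose columns lie in $\text{Ker}(A^t)\subseteq\mathbb{C}^n$. Because $A$ has a nonvanishing $q\times q$ minor, its rank is $q$ over both $\mathbb{R}$ and $\mathbb{C}$, so the complex kernel of $A^t$ is the complexification of the real kernel and both have dimension $n-q$. I would fix a real basis $b_1,\dots,b_{n-q}$ of the real kernel, which is simultaneously a $\mathbb{C}$-basis of the complex kernel. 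Any matrix $B$ with columns in the real kernel is then parametrized by coefficients $c\in\mathbb{R}^{(n-q)^2}$ via columns $\sum_j c_{jk}b_j$, and every minor of $B$ is a polynomial in $c$ with real coefficients; the same formulas define minors for complex $c\in\mathbb{C}^{(n-q)^2}$.

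Next I would argue that these minor-polynomials are not identically zero. The matrix $B'$ corresponds to some complex coefficient vector at which all minors are nonzero (this is what totally non-singular means), so each minor-polynomial is not identically zero on $\mathbb{C}^{(n-q)^2}$, hence not identically zero on $\mathbb{R}^{(n-q)^2}$. Since there are only finitely many minors, the union of their real zero sets is a proper closed subset of $\mathbb{R}^{(n-q)^2}$, being a finite union of proper algebraic subsets; choosing $c$ outside this union yields a real totally non-singular $B\in M(n,n-q,\mathbb{R})$, and $A^tB=0$ holds automatically because the columns of $B$ lie in $\text{Ker}(A^t)$.

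An alternative that avoids complexification is simply to rerun the induction of Lemma \ref{lem1} over $\mathbb{R}$: every step there uses only that a nonzero univariate polynomial has finitely many roots and that minors depend continuously on the entries, both valid over $\mathbb{R}$; one then checks the handful of places where a vector is produced — in the base case a nonzero element of $\text{Ker}(A^t)\subseteq\mathbb{R}^{q+1}$, and in Step 1 a real $v_\sigma$ with $\det(E_\sigma+v_\sigma e_0^*)\neq 0$ together with the solution $w_\sigma$ of $D^tw_\sigma=-F^tv_\sigma$ — all of which exist over $\mathbb{R}$ for the same reasons given in Lemma \ref{lem1}. I do not expect a genuine obstacle either way; the only point requiring care, in the complexification route, is that a proper real algebraic subset of $\mathbb{R}^{(n-q)^2}$ is not all of it, so that finitely many of them cannot cover it, and it is precisely here that the non-triviality of the minor-polynomials supplied by Lemma \ref{lem1} enters.
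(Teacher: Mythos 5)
Your proposal is correct. The paper's own proof is exactly your second, "alternative" route: it consists of the single remark that the arguments of Lemma \ref{lem1} also apply over $\mathbb{R}$, which is legitimate since every existence step there (a nonzero kernel vector in the base case, the vectors $v_\sigma$ and $w_\sigma$, and the repeated "a nonzero univariate polynomial has finitely many roots" selections) goes through verbatim over the reals. Your primary route via complexification is genuinely different and, in my view, tidier: it uses Lemma \ref{lem1} purely as a black box, identifies the complex kernel of $A^t$ as the complexification of the real kernel (valid since the rank of $A$ is $q$ over both fields and a real basis of the real kernel remains $\mathbb{C}$-independent), and then observes that each minor of a kernel-valued matrix is a real-coefficient polynomial in the coordinate matrix $c$ which is nonzero at the complex point furnished by $B'$, hence not identically zero on $\mathbb{R}^{(n-q)^2}$; a finite union of proper real algebraic subsets cannot exhaust $\mathbb{R}^{(n-q)^2}$, so a real totally non-singular $B$ exists. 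What each approach buys: the paper's one-liner is shortest but implicitly asks the reader to re-audit the entire induction over $\mathbb{R}$, whereas your reduction is a self-contained formal deduction of the real statement from the complex one at the modest cost of the kernel-parametrization argument. No gap in either route.
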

\begin{proof}
The arguments given in the proof of Lemma \ref{lem1} also apply to real numbers.
\end{proof}
For $k\in\{1,\dots,2n-3\}$, define the inclusion in the $k$-th antidiagonal $\iota_k:\mathbb{C}^{\gamma(n,k)}\to H(n)$ by
\begin{align*}
(\iota_k(v))_{jl}:=\frac{1}{\sqrt{2}}\left\{
		\begin{array}{ll}
			v_j & \mbox{if } j+l=k, j<l \\
			v_l^* & \mbox{if } j+l=k, l<j \\
			0 & \mbox{else}
		\end{array}
	\right.
\end{align*}
where 
\begin{align*}
\gamma(n,k)=\left\{
		\begin{array}{ll}
			\lceil k/2\rceil& \mbox{if } k\le n-1 \\
			\lceil n-1-k/2\rceil & \mbox{if } k>n-1
		\end{array}
	\right.
\end{align*} 
is the length of the upper half of the $k$-th antidiagonal. By expanding in the generalised Gell-Mann orthonormal basis of $H(n)$, it is easily seen that the inclusion of real vectors in the same antidiagonal preserves the standard inner product, i.e. for $k\in\{1,\dots,2n-3\}$ we have
\begin{align}\label{eq1}
\tr\big(\iota_k(v)\iota_k(w)\big)=\langle v,w\rangle,\ \forall v,w\in\mathbb{R}^{\gamma(n,k)}.
\end{align}
Furthermore, the inclusion of an imaginary and a real vector in the same antidiagonal yields Hilbert-Schmidt orthogonal matrices, i.e. for  $k\in\{1,\dots,2n-3\}$ we have
\begin{align}\label{eq2}
\tr\big(\iota_k(v)\iota_k(iw)\big)=0,\ \forall v,w\in\mathbb{R}^{\gamma(n,k)},
\end{align}
and finally that inclusions of vectors in different antidiagonals also yield Hilbert-Schmidt orthogonal matrices, i.e. for $k,j\in\{1,\dots,2n-3\}$  with $k\neq j$ we have
\begin{align}\label{eq3}
\tr\big(\iota_k(v)\iota_j(w)\big)=0,\ \forall v\in\mathbb{C}^{\gamma(n,k)},w\in\mathbb{C}^{\gamma(n,j)}.
\end{align}

The following theorem is the main result of the present paper.
\begin{theorem}\label{thm3}
Let $G_0$ be a basis of $C_r^n$ \footnote{$C_r^n$ was defined in Equation \eqref{cn}.}. Furthermore, for $k\in\{2r-1,\dots,2(n-r)-1\}$, let $A_k,A^\prime_k\in M(\gamma(n,k),r,\mathbb{R})$ be totally non-singular and define the tuple
\begin{align*}
G_k:=\big(\iota_k(A_k[0]),\iota_k(iA^\prime_k[0]),\iota_k(A_k[1]),\iota_k(iA^\prime_k[1]),\dots,\iota_k(A_k[r-1]),\iota_k(iA^\prime_k[r-1])\big).
\end{align*}
Then $G:=G_0\cup G_{2r-1}\cup G_{2r}\cup \dots \cup G_{2(n-r)-1}$ is $r$-complete and $|G|=4r(n-r)+n-2r$.
\end{theorem}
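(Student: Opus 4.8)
The plan is to verify the hypothesis of Proposition~\ref{proprank}: that every nonzero $X\in\text{Ker}(M_G)$ has at least $r+1$ positive eigenvalues. The cardinality identity $|G|=4r(n-r)+n-2r$ is a separate bookkeeping computation carried out at the end.

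First I would pin down $\text{Ker}(M_G)$. Since $G_0$ is a basis of $C_r^n$, the conditions coming from $G_0$ are equivalent to $X\in(C_r^n)^\perp$, which by \eqref{cn} is the space of Hermitian matrices with vanishing diagonal supported on the antidiagonals $k$ with $2r-1\le k\le 2(n-r)-1$; via the inclusions $\iota_k$ and the orthogonality relations \eqref{eq1}--\eqref{eq3} this space is the Hilbert--Schmidt orthogonal direct sum $\bigoplus_{k}\iota_k(\mathbb{C}^{\gamma(n,k)})$, so I write $X=\sum_k\iota_k(v_k)$ uniquely. Then, for fixed $k$ and $j$, \eqref{eq3} kills all cross terms, and splitting $v_k$ into its real and imaginary parts and using \eqref{eq1}, \eqref{eq2} (together with the analogous orthonormality of the imaginary inclusions, read off from the generalised Gell-Mann basis) gives $\text{tr}(\iota_k(A_k[j])X)=\langle A_k[j],\text{Re}\,v_k\rangle$ and $\text{tr}(\iota_k(iA'_k[j])X)=\langle A'_k[j],\text{Im}\,v_k\rangle$. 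Hence $X\in\text{Ker}(M_G)$ if and only if $A_k^t\,\text{Re}\,v_k=0$ and $(A'_k)^t\,\text{Im}\,v_k=0$ for all $k$. For every $k$ with $\gamma(n,k)=r$ (the antidiagonals near the two ends of the band) the totally non-singular matrices $A_k,A'_k$ are square, hence invertible, so $v_k=0$; for every $k$ with $\gamma(n,k)\ge r+1$, Lemma~\ref{cor2} furnishes totally non-singular $B_k,B'_k\in M(\gamma(n,k),\gamma(n,k)-r,\mathbb{R})$ with $A_k^tB_k=(A'_k)^tB'_k=0$, and since $A_k$ has full column rank $r$ we get $\text{Ker}(A_k^t)=\Span(B_k)$, and likewise for $A'_k$. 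Thus $\text{Ker}(M_G)$ is exactly the set of all $\sum_k\iota_k(v_k)$ with $\text{Re}\,v_k\in\Span(B_k)$ and $\text{Im}\,v_k\in\Span(B'_k)$ for every $k$.

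This last space is precisely a null space of the form constructed in \cite{chen2013uniqueness} (following the ideas of \cite{cubitt2008dimension}; see Appendix~A there), where it is proved that every nonzero element has at least $r+1$ positive eigenvalues --- and hence, by the remark following Proposition~\ref{proprank}, at least $r+1$ negative eigenvalues as well. Granting this, Proposition~\ref{proprank} yields that $M_G$ is $r$-complete. I expect this eigenvalue bound to be the main obstacle. To argue it directly one would suppose, for contradiction, that some nonzero $X\in\text{Ker}(M_G)$ has at most $r$ positive eigenvalues, so that $X\le 0$ on some subspace of dimension $\ge n-r$, and then exploit the total non-singularity of all the $B_k,B'_k$ by probing $X$ with suitably chosen polynomial (Vandermonde-type) vectors to force every $v_k=0$, a contradiction; organising this degeneration argument is exactly the technical work done in \cite{chen2013uniqueness}.

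Finally, $|G|=|G_0|+\sum_{k=2r-1}^{2(n-r)-1}|G_k|$, where each $|G_k|=2r$ and the band contains $2(n-r)-1-(2r-1)+1=2n-4r+1$ antidiagonals, while $|G_0|=\dim C_r^n$. From \eqref{cn}, $\dim(C_r^n)^\perp=2\sum_{k=2r-1}^{2(n-r)-1}\gamma(n,k)=2\big(\binom{n}{2}-2r(r-1)\big)$ --- the $\binom{n}{2}$ strictly-upper-triangular positions minus the $2r(r-1)$ of them lying on antidiagonals outside the band --- so $\dim C_r^n=n^2-n(n-1)+4r(r-1)=n+4r^2-4r$. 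Adding, $|G|=n+4r^2-4r+2r(2n-4r+1)=4r(n-r)+n-2r$.
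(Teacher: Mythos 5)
Your Step-1 analysis is correct and coincides with the paper's: the $G_0$-constraints confine $\text{Ker}(M_G)$ to $(C_r^n)^\perp=\bigoplus_k\iota_k(\mathbb{C}^{\gamma(n,k)})$, the orthogonality relations \eqref{eq1}--\eqref{eq3} decouple the antidiagonals and the real/imaginary parts, Lemma~\ref{cor2} produces the totally non-singular complements $B_k,B'_k$, and the kernel is exactly $\{\sum_k\iota_k(v_k):\text{Re}\,v_k\in\Span(B_k),\ \text{Im}\,v_k\in\Span(B'_k)\}$ (with $v_k=0$ forced on the four boundary antidiagonals where $\gamma(n,k)=r$). Your cardinality computation is also correct and agrees with the paper's.

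The gap is in the eigenvalue count, which is the actual content of the theorem: you assert that every nonzero element of this kernel has at least $r+1$ positive eigenvalues by pointing at \cite{chen2013uniqueness}, and the direct argument you sketch in its place (assume at most $r$ positive eigenvalues, so $X\le 0$ on an $(n-r)$-dimensional subspace, then probe with Vandermonde-type vectors to force all $v_k=0$) is not the mechanism that works, and I don't see how to close it as stated. The argument the paper carries out is quite different and goes \emph{forward}, not by contradiction: take the \emph{smallest} $k$ for which the $k$-th antidiagonal of $X$ is nonzero; its real (or imaginary) part lies in the column span of the totally non-singular $B_k$, and total non-singularity forces any nonzero vector in that span to have at least $\gamma(n,k)-(\gamma(n,k)-r-1)=r+1$ nonzero entries --- equivalently $2(r+1)$ nonzero entries of $X$ on that antidiagonal --- since otherwise a maximal minor of $B_k$ would vanish (Lemma~9 of \cite{cubitt2008dimension}). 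Because all lower antidiagonals of $X$ vanish, this produces a principal $2(r+1)\times 2(r+1)$ submatrix of the anti-triangular form \eqref{eq6} with nonzero corner entries, and an induction on $r$ using Cauchy interlacing together with the sign relation $\det(Y)\cdot\det(Y')<0$ shows such a matrix has $r+1$ positive and $r+1$ negative eigenvalues; interlacing then transfers this to $X$ and Proposition~\ref{proprank} finishes. Since the paper's stated contribution is precisely to make the \cite{chen2013uniqueness} construction explicit, you should reproduce this combinatorial-plus-interlacing step rather than cite it; as written, the central claim of the theorem rests on an unproved assertion and an incorrect heuristic for how one would prove it.
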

\begin{proof}
The idea of the proof is to use  Lemma \ref{cor2} to determine a basis of the null space of $M_G$ such that the construction of \cite{chen2013uniqueness} can be applied. We do this in the first step of the proof. In the second step of the proof we use the construction of \cite{chen2013uniqueness} to show that $M_G$ indeed is $r$-complete.

\textbf{Step 1.} First, by Lemma \ref{cor2}, there are totally non-singular $B_k,B_k^\prime\in M(\gamma(n,k),\gamma(n,k)-r,\mathbb{R}),\ k\in\{2r+1,\dots,2(n-r)-3\}$,  such that 
\begin{align}\label{eq4}
\begin{split}
A_k^tB_k&=0,\\
(A_k^\prime)^tB^\prime_k&=0.
\end{split}
\end{align}
Now let
\begin{gather*}
G^\bot_k:=\big(\iota_k(B_k[0]),\iota_k(iB^\prime_k[0]),\dots,\iota_k(B_k[\gamma(n,k)-r-1]),\iota_k(iB^\prime_k[\gamma(n,k)-r-1])\big),\\
\ k\in\{2r+1,\dots,2(n-r)-3\}
\end{gather*}
and let $G^\bot_k=(0)$ for $k\in\{2r-1,2r,2(n-r)-2,2(n-r)-1\}$. In the remainder of this first step we prove that $G^\bot_{2r-1}\cup \dots\cup G^\bot_{2(n-r)-1}$ is a basis of $\text{Ker}(M_G)$: For $k\in\{1,\dots,2n-3\}$, let $Q_k:=\{X\in H(n): X(j)=0\ \forall j\neq k\ \land\ X_{ii}=0\text{ for } 2i=k\}$ be the subspace of Hermitian matrices with vanishing diagonal and non-vanishing entries only in the $k$-th antidiagonal. By Equation \eqref{eq3}, $H(n)$ can be decomposed into the following mutually orthogonal subspaces:
\begin{align}\label{eq5}
H(n)=C_r^n\oplus Q_{2r-1}\oplus\dots\oplus Q_{2(n-r)-1}.
\end{align}
Note that $\text{Span}(G_k\cup G_k^\bot)\subseteq Q_k$ for all $k\in\{2r-1,\dots,2(n-r)-1\}$. Hence, by the decomposition \eqref{eq5}, to show that $G^\bot_{2r-1}\cup \dots\cup G^\bot_{2(n-r)-1}$ is a basis of $\text{Ker}(M_G)$ it suffices to prove that for $k\in\{2r-1,\dots,2(n-r)-1\}$ the matrices $G^\bot_k\cup G_k$ span the subspace $Q_k$ and that $\text{Span}(G_k^\bot)\subseteq \text{Ker}(M_G)$. First observe that indeed $\text{Span}(G_k^\bot)\subseteq \text{Ker}(M_G)$ for every $k\in\{2r+1,\dots,2(n-r)-3\}$: Note that for every $k\in\{2r+1,\dots,2(n-r)-3\}$,
\begin{gather}
\begin{split}
\text{tr}\big( \iota_k(A_k[l])\iota_k(B_k[j])\big)=\langle A_k[l],B_k[j]\rangle=(A_k^tB_k)_{lj}=0,\ \ \\ 
\text{tr}\big( \iota_k(iA^\prime_k[l])\iota_k(iB^\prime_k[j])\big)=\langle A^\prime_k[l],B^\prime_k[j]\rangle=((A^\prime_k)^tB^\prime_k)_{lj}=0,\\ 
\forall l\in\{0,\dots,r-1\},\ j\in\{0,\dots,\gamma(n,k)-r-1\},\ \ \ \ \
\end{split}
\end{gather}
by equations \eqref{eq1} and \eqref{eq4}. Furthermore,
\begin{gather}\label{eq8}
\begin{split}
\text{tr}\big( \iota_k(iA^\prime_k[l])\iota_k(B_k[j])\big)=0,\ \ \ \ \ \ \ \ \ \ \ \ \ \ \\ 
\text{tr}\big( \iota_k(A_k[l])\iota_k(iB^\prime_k[j])\big)=0,\ \ \ \ \ \ \ \ \ \ \ \ \ \ \\ 
\forall l\in\{0,\dots,r-1\},\ j\in\{0,\dots,\gamma(n,k)-r-1\},
\end{split}
\end{gather}
by Equation \eqref{eq2}. I.e. $\text{Span}(G_k^\bot)$ is orthogonal to $\text{Span}(G_k)$ and thus $\text{Span}(G_k^\bot)\subseteq\text{Ker}(M_G)$.

To conclude the first step, we prove that $G^\bot_k\cup G_k$ spans the subspace $Q_k$ for every $k\in\{2r-1,\dots,2(n-r)-1\}$: Let $k\in\{2r-1,\dots,2(n-r)-1\}$. Since $A_k$ is totally non-singular, the columns of $A_k$ are linearly independent and the same argument applies to $A_k^\prime$. Hence, by the equations \eqref{eq1} and \eqref{eq2}, $G_k$ is a tuple of linearly independent Hermitian matrices. The same argument applies to $G^\bot_k$, $k\in\{2r+1,\dots,2(n-r)-3\}$. But we have seen that $\text{Span}(G_k)$ is orthogonal to $\text{Span}(G^\bot_k)$ for $k\in\{2r+1,\dots,2(n-r)-3\}$. Furthermore, for $k\in\{2r-1,\dots,2(n-r)-1\}$,  $|G^\bot_k|+|G_k|=2(\gamma(n,k)-r)+2r=2\gamma(n,k)=\dim Q_k$ and thus $G^\bot_k\cup G_k$ indeed spans $Q_k$. 

Finally, observe that 
\begin{align*}
|G|&=\dim C_r^n+\sum_{i=2r-1}^{2(n-r)-1}|G_i|=\sum_{i=1}^{2r-2}2\gamma(n,i)+n+\sum_{i=1}^{2(n-2r)+1}2r\\
&=(2r)^2-2(2r)+n+2r(2(n-2r)+1)\\
&=4r(n-r)+n-2r.
\end{align*}

\textbf{Step 2.} 
In the second step, we essentially reproduce the construction of \cite{chen2013uniqueness} and some ideas of \cite{cubitt2008dimension}. We  show in the following that every nonzero matrix $X\in\text{Ker}(M_G)$ has at least $r+1$ positive and $r+1$ negative eigenvalues and this concludes the proof by Proposition \ref{proprank}.

Let $X\in\text{Ker}(M_G)$ be arbitrary. By the interlaced eigenvalue Theorem (Theorem 4.3.15 of \cite{horn2012matrix}) it suffices to prove that there is an $2(r+1)\times 2(r+1)$ principal submatrix of $X$ with $r+1$ positive and $r+1$ negative eigenvalues. We conclude the proof by finding such a submatrix: There is a smallest number $k\in\{2r+1,\dots,2(n-r)-3\}$ such that $X$ has non-vanishing entries in the $k$-th antidiagonal. First note that either the real or the imaginary part of the $k$-th antidiagonal does not vanish. Let us consider the case where the real part does not vanish, the other case can be shown analogously. The real part of the $k$-th antidiagonal of $\text{Ker}(M_G)$ is spanned by the  $\gamma(n,k)-r$ real matrices of $G_k^\bot$, i.e. each  $X\in\text{Ker}(M_G)$ is a linear combination of the $\gamma(n,k)-r$ real matrices of $G_k^\bot$. But then there have to be at least $2(r+1)$ non-vanishing entries in the $k$-th antidiagonal of $X$ because otherwise there would be a vanishing $(\gamma(n,k)-r)\times (\gamma(n,k)-r)$ minor of $B_k$ and this contradicts the fact that $B_k$ is totally non-singular (For more details see Lemma 9 of \cite{cubitt2008dimension}.). I.e. there is a $2(r+1)\times 2(r+1)$ principal submatrix of $X$ of the from:
\begin{gather}\label{eq6}
\begin{split}
\begin{pmatrix}
0&0&0&\dots&0&0&\overline{x}_1\\
0&0&0&\dots&0&\overline{x}_2&\overline{y}^1_{1}\\
0&0&0&\dots&\overline{x}_3&\overline{y}^2_1&\overline{y}^1_{2}\\
\vdots &\vdots&\vdots& &\vdots&\vdots&\vdots\\
0&0&x_3&\dots&0&\overline{y}^2_{2r-2}&\overline{y}^1_{{2r-1}}\\
0&x_2&{y}^2_{1}&\dots&{y}^2_{2r-2}&0&\overline{y}^1_{2r}\\
x_1&{y}^1_{1}&{y}^1_{2}&\dots&{y}^1_{2r-1}&{y}^1_{2r}&0
\end{pmatrix},
x_i\in\mathbb{C}\setminus\{0\},\ i\in\{1,\dots,r+1\},
\end{split}
\end{gather}
where $y_i^j\in\mathbb{C},\ j\in\{1,\hdots,r\},i\in\{1,\hdots,2(r+1)-2j\},$ are arbitrary.

Finally, we show by induction that a matrix of this form has at least $r+1$ positive and $r+1$ negative eigenvalues: The claim clearly holds for $r=0$. Now assume the claim holds for $r\in\mathbb{N}_0$. Let $Y$ be a $2(r+2)\times 2(r+2)$ matrix that is of the form illustrated in Equation \eqref{eq6}. Then, one can obtain a principal  $2(r+1)\times 2(r+1)$ submatrix $Y^\prime$ of $Y$ that is of the same form by e.g. deleting the first and last row as well as the first and last column of $Y$. Thus, by the induction hypothesis and the interlaced eigenvalue Theorem (Theorem 4.3.15 of \cite{horn2012matrix}), $Y$ has at least $r+1$ positive and $r+1$ negative eigenvalues. A straightforward calculation shows that $\det(Y)\cdot\det(Y^\prime)<0$. Since the determinant of a matrix is the product of its eigenvalues, the claim follows from $\det(Y)\cdot\det(Y^\prime)<0$.
\end{proof}

In the following the $r=1$ case is of particular interest because Theorem \ref{thm1} is obtained from this case by choosing the totally non-singular matrices appropriately.
\begin{corollary}\label{cor3}
Let $G_{0}:=(e_0e_0^*,\dots,e_{n-1}e_{n-1}^*)$. Furthermore let $w_k,v_k\in \mathbb{R}^{\gamma(n,k)},\ k\in\{1,\dots,2n-3\}$, be such that every entry of $v_k$ and every entry of $w_k$ is nonzero.
Then $G:=G_0\cup \big(\iota_1(v_1),\iota_1(iw_1)\big)\cup \dots \cup \big(\iota_{2n-3}(v_{2n-3}),\iota_{2n-3}(iw_{2n-3})\big)$ is $1$-complete and $|G|=5n-6$.
\end{corollary}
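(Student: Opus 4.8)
The plan is to obtain Corollary~\ref{cor3} as the special case $r=1$ of Theorem~\ref{thm3}. First I would observe that for $r=1$ the conditions defining $C_r^n$ in \eqref{cn} range over \emph{all} pairs $i\neq j$, since $i+j$ then runs through $\{1,\dots,2n-3\}$ as $i,j$ range over $\{0,\dots,n-1\}$; hence $C_1^n$ is exactly the space of diagonal Hermitian matrices, which has dimension $n$, and $G_0=(e_0e_0^*,\dots,e_{n-1}e_{n-1}^*)$ is a basis of it. Likewise, for $r=1$ the antidiagonal index range $\{2r-1,\dots,2(n-r)-1\}$ used in Theorem~\ref{thm3} is precisely $\{1,\dots,2n-3\}$.

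Next I would feed the hypotheses of Theorem~\ref{thm3} with this data: take $A_k:=v_k$ and $A_k^\prime:=w_k$, viewed as single-column matrices in $M(\gamma(n,k),1,\mathbb{R})$ (so that $A_k[0]=v_k$ and $A_k^\prime[0]=w_k$). The point to check is that such a column vector is totally non-singular exactly when all of its $1\times 1$ minors, i.e.\ all of its entries, are nonzero, which is precisely the assumption on $v_k$ and $w_k$. With this identification the tuple $G_k$ of Theorem~\ref{thm3} becomes $\big(\iota_k(v_k),\iota_k(iw_k)\big)$, so that $G=G_0\cup\big(\iota_1(v_1),\iota_1(iw_1)\big)\cup\dots\cup\big(\iota_{2n-3}(v_{2n-3}),\iota_{2n-3}(iw_{2n-3})\big)$ is exactly the tuple produced by Theorem~\ref{thm3} with $r=1$.

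Applying Theorem~\ref{thm3} then yields that $G$ is $r$-complete with $r=1$ and that $|G|=4r(n-r)+n-2r$, which for $r=1$ equals $4(n-1)+n-2=5n-6$. I do not expect a genuine obstacle here; the only points needing a word of justification are that $G_0$ really is a basis of $C_1^n$ and that ``all entries nonzero'' is the correct specialisation of ``totally non-singular'' to a matrix with a single column, both of which are immediate.
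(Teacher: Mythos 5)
Your proposal is correct and follows essentially the same route as the paper: specialise Theorem~\ref{thm3} to $r=1$, noting that $G_0$ is a basis of the diagonal subspace $C_1^n$ and that a single-column real matrix is totally non-singular precisely when all its entries (its only minors) are nonzero, so $A_k=v_k$, $A_k'=w_k$ satisfy the hypotheses. The count $4r(n-r)+n-2r=5n-6$ for $r=1$ matches as well; there is nothing to add.
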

\begin{proof}
First, note that $G_{0}$ is a basis of $C_1^n$. Furthermore as by assumption all entries of matrices $w_k,v_k\in \mathbb{R}^{\gamma(n,k)}\simeq M(\gamma(n,k),1,\mathbb{R}),\ k\in\{1,\dots,2n-3\},$ are nonzero, we conclude that all their minors are nonzero\footnote{The minors of a $m\times 1$ matrix are simply the entries of the matrix.}. Consequently the matrices $w_k,v_k\in M(\gamma(n,k),1,\mathbb{R}),\ k\in\{1,\dots,2n-3\},$ are totally non-singular. Hence $G_{0}$ and $G_k:=\big(\iota_1(v_1),\iota_1(iw_1)),\ k\in\{1, \hdots, 2n-3\}$ fulfil the conditions of Theorem \ref{thm3} for $r=1$ and thus $G=G_0\cup G_1\cup\hdots\cup G_{2n-3}$ is $1$-complete.
\end{proof}
\begin{example}\label{ex}
For $i\in\{1,\dots,2n-3\}$, we can choose $w_{i}=v_{i}=\sqrt{2}e$, where $e:=(1,\dots,1)\in\mathbb{R}^{\gamma(n,i)}$ is the vector with a one in every component. Altogether this yields $2(2n-3)+n=5n-6$ Hermitian operators for $G$. For $n=4$ these are
\begin{tiny}
\begin{gather*}
\begin{pmatrix}
1&0&0&0\\
0&0&0&0\\
0&0&0&0\\
0&0&0&0
\end{pmatrix},
\begin{pmatrix}
0&0&0&0\\
0&1&0&0\\
0&0&0&0\\
0&0&0&0
\end{pmatrix},
\begin{pmatrix}
0&0&0&0\\
0&0&0&0\\
0&0&1&0\\
0&0&0&0
\end{pmatrix},
\begin{pmatrix}
0&0&0&0\\
0&0&0&0\\
0&0&0&0\\
0&0&0&1
\end{pmatrix},
\begin{pmatrix}
0&1&0&0\\
1&0&0&0\\
0&0&0&0\\
0&0&0&0
\end{pmatrix},
\begin{pmatrix}
0&i&0&0\\
-i&0&0&0\\
0&0&0&0\\
0&0&0&0
\end{pmatrix},
\begin{pmatrix}
0&0&1&0\\
0&0&0&0\\
1&0&0&0\\
0&0&0&0
\end{pmatrix},
\begin{pmatrix}
0&0&i&0\\
0&0&0&0\\
-i&0&0&0\\
0&0&0&0
\end{pmatrix},\\
\begin{pmatrix}
0&0&0&1\\
0&0&1&0\\
0&1&0&0\\
1&0&0&0
\end{pmatrix},
\begin{pmatrix}
0&0&0&i\\
0&0&i&0\\
0&-i&0&0\\
-i&0&0&0
\end{pmatrix},
\begin{pmatrix}
0&0&0&0\\
0&0&0&1\\
0&0&0&0\\
0&1&0&0
\end{pmatrix},
\begin{pmatrix}
0&0&0&0\\
0&0&0&i\\
0&0&0&0\\
0&-i&0&0
\end{pmatrix},
\begin{pmatrix}
0&0&0&0\\
0&0&0&0\\
0&0&0&1\\
0&0&1&0
\end{pmatrix},
\begin{pmatrix}
0&0&0&0\\
0&0&0&0\\
0&0&0&i\\
0&0&-i&0
\end{pmatrix}.
\end{gather*}
\end{tiny}
\end{example}
Finally, let us give a proof of Theorem \ref{thm2}.
\begin{proof}
For $k\in\{2r-1,\dots,2(n-r)-1\}$, define $A_k,A^\prime_k\in M(\gamma(n,k),r,\mathbb{R})$ by setting $(A_k)_{jl}=(A^\prime_k)_{jl}=x_{l+1}^j$ for all $j\in\{0,\hdots,\gamma(n,k)-1\},l\in\{0,\hdots,r-1\}$. Observe that both $A_k$ and $A^\prime_k$ can be considered as the first $r$ columns of a $\gamma(n,k)\times \gamma(n,k)$ Vandermonde matrix and since $x_j\neq x_l$ for all $j,l\in\{1,\hdots,r\}$ with $j\neq l$ and $x_l\neq 0$ for all $l\in\{1,\hdots,r\}$ they are thus totally non-singular. Applying Theorem \ref{thm3} to $A_k,A^\prime_k$ then concludes the proof. 
\end{proof}
\subsection{Proof of Theorem \ref{thm1}}
Let us now give a proof of Theorem \ref{thm1}.
\begin{proof}
Define $Y_k,X_k\in H(n),\ k\in\{1,\dots,2n-3\},$ by 
\begin{gather*}
(X_k)_{jl}:=\delta_{j+l,k}\cos\left(\frac{j-l}{2n}\pi\right),\\
(Y_k)_{jl}:=i\delta_{j+l,k}\sin\left(\frac{j-l}{2n}\pi\right),\\
j,l\in\{0,\dots,n-1\}.
\end{gather*}
Next observe two things:
\begin{enumerate}
\item[1.] The matrices $\{X_1,Y_1,\dots,X_{2n-3},Y_{2n-3}\}\subseteq H(n)$ are linearly independent by equations \eqref{eq2} and \eqref{eq3}.
\item[2.] Since $0<\frac{j-l}{2n}\pi<\frac{\pi}{2}$ for $j,l\in\{0,\dots,n-1\}$, $j>l$, we find $(X_k)_{jl}\neq 0$ and $(Y_k)_{jl}\neq 0$ for $j+l=k$, $j>l$.
\end{enumerate}
Let $u_k,w_k\in\mathbb{R}^{\gamma(n,k)},\ k\in\{1,\dots,2n-3\}$, be such that $\iota_k(u_k)=X_k-\delta_{k/2,\lceil k/2\rceil}e_{\lceil k/2\rceil}e_{\lceil k/2\rceil}^*$, $\iota_k(iw_k)=Y_k$ and note that both $u_k$ and $w_k$ have no vanishing entry. Thus, by Corollary \ref{cor3}, $\tilde{G}:=(e_0e_0^*,\dots,e_{n-1}e_{n-1}^*)\cup (X_1,Y_1,\dots,X_{2n-3},Y_{2n-3})$ is $1$-complete.

Let $G:=(e_0e_0^*,\dots,e_{n-1}e_{n-1}^*,v_1v_1^*,\overline{v}_1\overline{v}_1^*,\dots,v_{2n-3}v_{2n-3}^*,\overline{v}_{2n-3}\overline{v}_{2n-3}^*)$. To conclude the proof, we show that $\text{Span}(G)=\text{Span}(\tilde{G})$. First note that for $k\in\{1,\dots,2n-3\}$
\begin{align*}
v_kv_k^*=\sum_{j=1}^{2n-3}x_k^j (X_j+Y_j)+e_0e_0^*+x_k^{2n-2}e_{n-1}e_{n-1}^*,\\
\overline{v}_k\overline{v}_k^*=\sum_{j=1}^{2n-3}x_k^j (X_j-Y_j)+e_0e_0^*+x_k^{2n-2}e_{n-1}e_{n-1}^*
\end{align*} 
and thus $\text{Span}(G)\subseteq\text{Span}(\tilde{G})$. In order to show that $\text{Span}(\tilde{G})\subseteq\text{Span}(G)$, consider the matrix 
\begin{align*}
T:=\begin{pmatrix}
x_1&x_1^2&x_1^3&\dots&x_1^{2n-3}\\
x_2&x_2^2&x_2^3&\dots&x_2^{2n-3}\\
x_3&x_3^2&x_3^3&\dots&x_3^{2n-3}\\
\vdots&\vdots&\vdots& &\vdots\\
x_{2n-3}&x_{2n-3}^2&x_{2n-3}^3&\dots&x_{2n-3}^{2n-3}.
\end{pmatrix}
\end{align*}
The matrix $T$ is a Vandermonde matrix and thus invertible if $x_i\neq x_j$ for $i\neq j$. Hence we find \footnote{Different from the rest of the present paper, the indices we use to label $T$ begin with $1$, not with $G$.}
\begin{align*}
X_k&=\frac12\sum_{j=1}^{2n-3} (T^{-1})_{kj}(v_jv_j^*+\overline{v}_j\overline{v}_j^*-2e_0e_0^*-2x_k^{2n-2}e_{n-1}e_{n-1}^*),\\
Y_k&=\frac12\sum_{j=1}^{2n-3} (T^{-1})_{kj}(v_jv_j^*-\overline{v}_j\overline{v}_j^*)
\end{align*}
and this shows that $\text{Span}(\tilde{G})\subseteq\text{Span}(G)$.
\end{proof}

\begin{remark}
Note that there are many possible choices for the phases of the $v_i$. The only constraint is that $j\varphi\neq \frac{k\pi}{2}$ for all $j\in\{1,\dots,n-1\},\ k\in\mathbb{Z}$.
\end{remark}

\subsection{Proof of Theorem \ref{thmstab} and Proposition \ref{propstab}}
For $X_r\in\mathcal{S}^n_r$, $E\in H(n)$, $\epsilon\geq 0$ and a measurement $M$ define the set
\begin{gather}\label{spec}
 F_\epsilon(X_r,E,M):=\{Y\in \mathcal{S}^n:\|M(Y)-b\|_2\leq \epsilon\},
\end{gather}
where $b=M(X_r+E)$.
\begin{lemma}(Stability.)\label{lemstab}
Let $M$ be an $r$-complete measurement and let $\epsilon>0$. Then, there exists a constant $C_M>0$ independent of $\epsilon$ such that for all $X_r\in\mathcal{S}^n_r$,  and $E\in H(n)$ with $\|M(E)\|_2\leq\epsilon$ we have
\begin{align*}
Y\in F_\epsilon(X_r,E,M)\Rightarrow \|Y-X_r\|_2\leq C_M \epsilon.
\end{align*}
\end{lemma}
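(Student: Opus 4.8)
The plan is to argue by contradiction using a compactness/scaling argument, which is the natural way to extract a uniform constant $C_M$ out of a purely qualitative hypothesis like $r$-completeness. First I would reduce the claim to a statement about a single homogeneous quantity. Given $Y \in F_\epsilon(X_r, E, M)$, write $\|Y - X_r\|_2 = t$; if $t = 0$ there is nothing to prove, so assume $t > 0$ and set $Z := (Y - X_r)/t \in H(n)$, a unit-norm Hermitian matrix. From $Y \in \mathcal{S}^n$ and $X_r \in \mathcal{S}_r^n$ we get that $Z = \tfrac1t(Y - X_r)$ lies (up to the positive scalar $1/t$) in the difference set $\Delta = \{P - Q : P \in \mathcal{S}_r^n,\ Q \in \mathcal{S}^n\}$ considered in the proof of Proposition \ref{proprank}; in particular $\lambda_{n-r}(Z) \le 0$, since such a difference has at most $r$ positive eigenvalues. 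On the other hand, $\|M(Z)\|_2 = \tfrac1t \|M(Y) - M(X_r)\|_2 \le \tfrac1t(\|M(Y) - b\|_2 + \|M(E)\|_2) \le \tfrac{2\epsilon}{t}$, using $b = M(X_r + E)$ and the two hypotheses $\|M(Y)-b\|_2 \le \epsilon$, $\|M(E)\|_2 \le \epsilon$. So it suffices to prove a lower bound of the form $\|M(Z)\|_2 \ge c > 0$ over the relevant set of unit-norm $Z$, for then $t \le 2\epsilon/c$ and we may take $C_M = 2/c$.

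Next I would isolate the right set over which to take this infimum. The cleanest choice is the compact set
\begin{align*}
\mathcal{K} := \{ Z \in H(n) : \|Z\|_2 = 1,\ \lambda_{n-r}(Z) \le 0 \},
\end{align*}
which contains all the normalized differences arising above and is closed and bounded, hence compact (the map $Z \mapsto \lambda_{n-r}(Z)$ is continuous by Weyl's inequalities). The function $Z \mapsto \|M(Z)\|_2$ is continuous, so it attains its minimum $c$ on $\mathcal{K}$. I claim $c > 0$: if $c = 0$, there is some $Z_0 \in \mathcal{K}$ with $M(Z_0) = 0$, i.e. a nonzero $Z_0 \in \mathrm{Ker}(M)$ with $\lambda_{n-r}(Z_0) \le 0$, meaning $Z_0$ has at most $n - r$ strictly positive eigenvalues; but wait — I need "$Z_0$ has at most $r$ positive eigenvalues" to contradict Proposition \ref{proprank}. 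This is exactly the point where the definition of $\mathcal{K}$ must be chosen correctly: the normalized difference $Z = (Y - X_r)/t$ has at most $r$ positive eigenvalues, so the honest constraint is $\lambda_{r+1}(Z) \le 0$. I would therefore instead set $\mathcal{K} := \{Z \in H(n) : \|Z\|_2 = 1,\ \lambda_{r+1}(Z) \le 0\}$, which is still compact, still contains every normalized $Z$ from the first paragraph, and now a zero of $\|M(\cdot)\|_2$ on $\mathcal{K}$ would be a nonzero kernel element with at most $r$ positive eigenvalues — directly contradicting Proposition \ref{proprank} (equivalently Proposition \ref{proprank}'s statement that every nonzero kernel element has at least $r+1$ positive eigenvalues). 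Hence $c > 0$.

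Finally I would assemble the pieces: set $C_M := 2/c$ where $c = \min_{Z \in \mathcal{K}} \|M(Z)\|_2 > 0$; this depends only on $M$ (and $r$, which is fixed), not on $\epsilon$, $X_r$ or $E$. Then for any $Y \in F_\epsilon(X_r, E, M)$ with $\|Y - X_r\|_2 = t > 0$, the normalized $Z$ lies in $\mathcal{K}$, so $c \le \|M(Z)\|_2 \le 2\epsilon/t$, giving $t \le C_M \epsilon$; and $t = 0$ trivially satisfies the bound. The main obstacle — really the only subtle point — is getting the eigenvalue bookkeeping exactly right so that the compact constraint set is matched precisely to Proposition \ref{proprank}: one wants $\lambda_{r+1}(Z) \le 0$ (at most $r$ positive eigenvalues) rather than the $\lambda_{n-r}$ version, and one should double-check the indexing convention $\lambda_i(A) = \mathrm{Eig}(A)_{i-1}$ used in the paper. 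To recover the explicit bound announced in the remark, one observes $\|M(Z)\|_2 \ge \sigma_{\min}\|Z\|_2 = \sigma_{\min}$ when $Z$ is orthogonal to $\mathrm{Ker}(M)$ and splits a general $Z \in \mathcal{K}$ into its component in $\mathrm{Ker}(M)$ — where the constraint $\lambda_{r+1}(Z) \le 0$ forces that component to be bounded away from all of $\mathcal{K}$ by $\kappa := -\max_{W \in \mathrm{Ker}(M),\|W\|_2=1}\lambda_{n-r}(W)$ — and its orthogonal complement, but for the lemma as stated the soft compactness argument already suffices.
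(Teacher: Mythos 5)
Your overall strategy --- normalize the difference $Y-X_r$, restrict to a compact set of unit-norm Hermitian matrices cut out by an eigenvalue-sign constraint, and extract a positive minimum of $\|M(\cdot)\|_2$ by contradiction with Proposition \ref{proprank} --- is sound and genuinely different from the paper's proof. The paper instead splits along $\text{Range}(M^*)\oplus\text{Ker}(M)$, controls the range component by the smallest singular value $\sigma_{\min}$ of $M$, and controls the kernel component via the spectral variation bound, Weyl's inequality and the quantity $\kappa=-\max_{Z\in\text{Ker}(M),\|Z\|_2=1}\lambda_{n-r}(Z)$; this buys the explicit constant $C_M\le\frac{2}{\sigma_{\min}}(1+\frac{1}{\kappa})$ quoted in the remark after Theorem \ref{thmstab}, whereas your compactness argument yields only the existence of $C_M$ --- which is, however, all that the lemma asserts.

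The problem is that the step you yourself single out as ``the only subtle point'' is resolved in the wrong direction. For $Y\in\mathcal{S}^n$ and $X_r\in\mathcal{S}_r^n$ it is $X_r-Y$ that lies in $\Delta=\{P-Q:P\in\mathcal{S}_r^n,\ Q\in\mathcal{S}^n\}$, so $Y-X_r$ has at most $r$ \emph{negative} eigenvalues, not at most $r$ positive ones: since $Y\geq 0$ we have $Y-X_r\geq -X_r$, hence $\lambda_{n-r}(Y-X_r)\geq\lambda_{n-r}(-X_r)=-\lambda_{r+1}(X_r)=0$. Consequently your ``corrected'' constraint $\lambda_{r+1}(Z)\leq 0$ is false for $Z=(Y-X_r)/t$ in general (take $X_r=0$ and $Y$ positive definite of full rank), so the membership $Z\in\mathcal{K}$ on which the final assembly rests does not hold, and the contradiction is never brought to bear on the $Z$ that actually arises. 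The repair is immediate: either normalize $X_r-Y$ instead of $Y-X_r$, or keep $Z$ and take $\mathcal{K}=\{Z\in H(n):\|Z\|_2=1,\ \lambda_{n-r}(Z)\geq 0\}$, deriving the contradiction by applying Proposition \ref{proprank} to $-Z_0$ (equivalently, by the remark following that proposition, every nonzero element of $\text{Ker}(M)$ also has at least $r+1$ negative eigenvalues). Since $\|M(Z)\|_2=\|M(-Z)\|_2$, the minimum $c$ and hence $C_M=2/c$ are unchanged by this sign flip, so the proof is salvageable; but as written it asserts a false inequality at its pivotal step.
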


\begin{proof}
Denote by $\pi:H(n)\to \text{Range}(M^*)$ the orthogonal projection on the subspace $\text{Range}(M^*)\subseteq H(n)$ and by $\pi^\bot:H(n)\to \text{Ker}(M)$ the orthogonal projection on the subspace $\text{Ker}(M)\subseteq H(n)$. Furthermore, let $Y^\prime:=\pi^\bot(X_r)+\pi(Y)$ and let $\sigma_{min}$ be the smallest singular value of $M$ \footnote{We assume $M$ to have full rank.}. Then, we find
\begin{align}
\|X_r-Y^\prime\|&=\|\pi(X_r-Y)\|_2\leq \frac1{\sigma_{min}}\|M(Y-X_r)\|_2\nonumber\\
&\leq \frac1{\sigma_{min}}(\|M(X_r)-b\|_2+\|M(Y)-b\|_2)\leq \frac1{\sigma_{min}}(\|M(E)\|_2+\epsilon)\nonumber\\
\label{eq9}&\leq\frac2{\sigma_{min}}\epsilon.
\end{align}
From the spectral variation bound for Hermitian matrices (Theorem III.2.8 of \cite{bhatia2013matrix}) we conclude that
\begin{align*}
\|\text{Eig}(X_r)-\text{Eig}(Y^\prime)\|_2&=\sqrt{\sum_{i=1}^{r}\big(\lambda_i(X_r)-\lambda_i(Y^\prime)\big)^2+\sum_{i=r+1}^{n}\lambda_i(Y^\prime)^2}\\
&\leq\frac2{\sigma_{min}}\epsilon.
\end{align*}
But this implies that $|\lambda_i(Y^\prime)|\leq \frac2{\sigma_{min}}\epsilon$ for $i\in\{r+1,\dots,n\}$. 

Next, note that
\begin{align*}
\kappa:=-\max_{Z\in\text{Ker}(M),\|Z\|_2=1}\lambda_{n-r}(Z)
\end{align*}
exists by compactness of $\{Z\in\text{Ker}(M):\|Z\|_2=1\}$ and continuity of $\lambda_{n-r}$. Furthermore,
by Proposition \ref{proprank}, every nonzero $Z\in  \text{Ker}(M)$ has at least $r+1$ negative eigenvalues and hence we conclude that $\kappa>0$.

There exists $Z\in \text{Ker}(M)$ with  $\|Z\|_2=1$ and $\alpha\geq 0$ such that $Y=Y^\prime+\alpha Z$  \footnote{Note that $\alpha Z=\pi^\bot(Y)-\pi^\bot(X_r)\in\text{Ker}(M)$.}. Since $Y\geq 0$ we conclude from Weyl's inequality (Theorem III.2.1 of \cite{bhatia2013matrix}) that
\begin{align*}
0\leq\lambda_{n}(Y^\prime+\alpha Z)\leq \lambda_{r+1}(Y^\prime)+\alpha\lambda_{n-r}(Z)\leq \frac{2}{\sigma_{min}}\epsilon-\alpha\kappa.
\end{align*}
and hence we find
\begin{align}\label{eq10}
\alpha\leq \frac{2}{\kappa\sigma_{min}}\epsilon.
\end{align}
Finally, combining equations \eqref{eq9} and \eqref{eq10}, we conclude that
\begin{align*}
\|Y-X_r\|_2&= \|Y^\prime+\alpha Z-X_r\|_2\leq\|Y^\prime-X_r\|_2+\|\alpha Z\|_2\\
&\leq \left(\frac{2}{\sigma_{min}}+\frac{2}{\kappa\sigma_{min}}\right)\epsilon.
\end{align*}
Choosing $C_M=\frac{2}{\sigma_{min}}(1+\frac{1}{\kappa})$ then proves the claim.
\end{proof}
\begin{remark}
Since $\kappa$ just depends on $\text{Ker}(M)$, it is independent of the choice of basis for $\text{Range}(M^*)$. Thus, since it is always possible to choose an orthonormal basis of $\text{Range}(M^*)$, the constant $C_M$ is mainly determined by $\kappa$.
\end{remark}
The proof of Theorem \ref{thmstab} is an immediate consequence of this lemma.
\begin{remark}
Let $M$ be a measurement that is not $r$-complete. Then there exist $Z_r\in\mathcal{S}^n_r$ and $Z\in\mathcal{S}^n$ with $Z_r\neq Z$ such that $M(Z_r-Z)=0$ and we find $Z\in F_\epsilon(Z_r,E,M)$ for all $\epsilon>0$ and $E\in H(n)$ with $\|M(E)\|_2\leq\epsilon$. Thus, if $\id\in\text{Range}(M^*)$, the $r$-complete property is necessary to enable the recovery of every $X_r\in\mathcal{S}^n_r$ via the optimization problem \eqref{sdp2}.
\end{remark}
Finally let us give the proof of Proposition \ref{propstab}.
\begin{proof}
From Theorem \ref{thmstab} we obtain the bound $\|Y-xx^*\|_2\leq C_M\epsilon$ and the proof of Lemma \ref{stab} yields the bound $\sqrt{\sum_{i=2}^n\lambda_i(Y)^2}\le C_M\epsilon$. From this we find
\begin{align*}
\|xx^*-\hat{x}\hat{x}^*\|_2\leq \|Y-xx^*\|_2+\|Y-\hat{x}\hat{x}^*\|_2\leq 2C_M\epsilon.
\end{align*}

Finally let $\varphi\in[0,2\pi)$ be such that $\langle x,e^{i\varphi}\hat{x}\rangle$ is positive. Then,
\begin{align*}
\|x-e^{i\varphi}\hat{x}\|^2_2\|x\|_2^2&=\left(\|x\|_2^2+\|\hat{x}\|_2^2-2\text{Re}\left(\langle x,e^{i\varphi}\hat{x}\rangle\right)\right)\|x\|_2^2\\
&=\left(\|x\|_2^2+\|\hat{x}\|_2^2-2|\langle x,\hat{x}\rangle|\right)\|x\|_2^2\\
&\leq\left(\|x\|_2^2+\|\hat{x}\|_2^2-2|\langle x,\hat{x}\rangle|\right)\left(\|x\|_2^2+\|\hat{x}\|_2^2+2|\langle x,\hat{x}\rangle|\right)\\
&= \left(\|x\|_2^2+\|\hat{x}\|_2^2\right)^2-4|\langle x,\hat{x}\rangle|^2\\
&= \|x\|_2^4+\|\hat{x}\|_2^4-2|\langle x,\hat{x}\rangle|^2+2\|x\|_2^2\|\hat{x}\|_2^2-2|\langle x,\hat{x}\rangle|^2\\
&\leq 2\left(\|x\|_2^4+\|\hat{x}\|_2^4-2|\langle x,\hat{x}\rangle|^2\right)\\
&= 2\|xx^*-\hat{x}\hat{x}^*\|_2^2\\
&\leq 2(2C_M\epsilon)^2.
\end{align*}
\end{proof}
\bibliographystyle{unsrt}
\bibliography{bibliography}

\begin{thebibliography}{10}

\bibitem{balan2006signal}
Radu Balan, Pete Casazza, and Dan Edidin.
\newblock On signal reconstruction without phase.
\newblock {\em Applied and Computational Harmonic Analysis}, 20(3):345--356,
  2006.

\bibitem{conca2014algebraic}
Aldo Conca, Dan Edidin, Milena Hering, and Cynthia Vinzant.
\newblock An algebraic characterization of injectivity in phase retrieval.
\newblock {\em Applied and Computational Harmonic Analysis}, 2014.

\bibitem{kech2}
Michael Kech and Michael~M. Wolf.
\newblock Quantum tomography of semi-algebraic sets with constrained
  measurements.
\newblock {\em arXiv:1507.00903}, 2015.

\bibitem{mondragon2013determination}
Damien Mondragon and Vladislav Voroninski.
\newblock Determination of all pure quantum states from a minimal number of
  observables.
\newblock {\em arXiv preprint arXiv:1306.1214}, 2013.

\bibitem{carmeli2015many}
Claudio Carmeli, Teiko Heinosaari, Jussi Schultz, and Alessandro Toigo.
\newblock How many orthonormal bases are needed to distinguish all pure quantum
  states?
\newblock {\em arXiv preprint arXiv:1504.01590}, 2015.

\bibitem{heinosaari2013quantum}
Teiko Heinosaari, Luca Mazzarella, and Michael~M Wolf.
\newblock Quantum tomography under prior information.
\newblock {\em Communications in Mathematical Physics}, 318(2):355--374, 2013.

\bibitem{fickus2014phase}
Matthew Fickus, Dustin~G Mixon, Aaron~A Nelson, and Yang Wang.
\newblock Phase retrieval from very few measurements.
\newblock {\em Linear Algebra and its Applications}, 449:475--499, 2014.

\bibitem{balan2009painless}
Radu Balan, Bernhard~G Bodmann, Peter~G Casazza, and Dan Edidin.
\newblock Painless reconstruction from magnitudes of frame coefficients.
\newblock {\em Journal of Fourier Analysis and Applications}, 15(4):488--501,
  2009.

\bibitem{alexeev2014phase}
Boris Alexeev, Afonso~S Bandeira, Matthew Fickus, and Dustin~G Mixon.
\newblock Phase retrieval with polarization.
\newblock {\em SIAM Journal on Imaging Sciences}, 7(1):35--66, 2014.

\bibitem{bandeira2014phase}
Afonso~S Bandeira, Yutong Chen, and Dustin~G Mixon.
\newblock Phase retrieval from power spectra of masked signals.
\newblock {\em Information and Inference}, page iau002, 2014.

\bibitem{candes2015phase}
Emmanuel~J Candes, Yonina~C Eldar, Thomas Strohmer, and Vladislav Voroninski.
\newblock Phase retrieval via matrix completion.
\newblock {\em SIAM Review}, 57(2):225--251, 2015.

\bibitem{candes2013phaselift}
Emmanuel~J Candes, Thomas Strohmer, and Vladislav Voroninski.
\newblock Phaselift: Exact and stable signal recovery from magnitude
  measurements via convex programming.
\newblock {\em Communications on Pure and Applied Mathematics},
  66(8):1241--1274, 2013.

\bibitem{candes2014solving}
Emmanuel~J Candes and Xiaodong Li.
\newblock Solving quadratic equations via phaselift when there are about as
  many equations as unknowns.
\newblock {\em Foundations of Computational Mathematics}, 14(5):1017--1026,
  2014.

\bibitem{candes2015wirtinger}
Emmanuel~J Candes, Xiaodong Li, and Mahdi Soltanolkotabi.
\newblock Phase retrieval via wirtinger flow: Theory and algorithms.
\newblock {\em IEEE Transactions on Information Theory}, 61(4):1985--2007,
  2015.

\bibitem{gross2015partial}
David Gross, Felix Krahmer, and Richard Kueng.
\newblock A partial derandomization of phaselift using spherical designs.
\newblock {\em Journal of Fourier Analysis and Applications}, 21(2):229--266,
  2015.

\bibitem{kueng2015spherical}
Richard Kueng, David Gross, and Felix Krahmer.
\newblock Spherical designs as a tool for derandomization: The case of
  phaselift.
\newblock In {\em 11th international conference on Sampling Theory and
  Applications (SampTA 2015), Washington, USA}, 2015.

\bibitem{bodmann2015stable}
Bernhard~G Bodmann and Nathaniel Hammen.
\newblock Stable phase retrieval with low-redundancy frames.
\newblock {\em Advances in computational mathematics}, 41(2):317--331, 2015.

\bibitem{bodmann2016algorithms}
Bernhard~G Bodmann and Nathaniel Hammen.
\newblock Algorithms and error bounds for noisy phase retrieval with
  low-redundancy frames.
\newblock {\em Applied and Computational Harmonic Analysis}, 2016.

\bibitem{chen2013uniqueness}
Jianxin Chen, Hillary Dawkins, Zhengfeng Ji, Nathaniel Johnston, David Kribs,
  Frederic Shultz, and Bei Zeng.
\newblock Uniqueness of quantum states compatible with given measurement
  results.
\newblock {\em Physical Review A}, 88(1):012109, 2013.

\bibitem{candes2009exact}
Emmanuel~J Cand{\`e}s and Benjamin Recht.
\newblock Exact matrix completion via convex optimization.
\newblock {\em Foundations of Computational mathematics}, 9(6):717--772, 2009.

\bibitem{candes2010matrix}
Emmanuel~J Candes and Yaniv Plan.
\newblock Matrix completion with noise.
\newblock {\em Proceedings of the IEEE}, 98(6):925--936, 2010.

\bibitem{candes2010power}
Emmanuel~J Cand{\`e}s and Terence Tao.
\newblock The power of convex relaxation: Near-optimal matrix completion.
\newblock {\em Information Theory, IEEE Transactions on}, 56(5):2053--2080,
  2010.

\bibitem{recht2010guaranteed}
Benjamin Recht, Maryam Fazel, and Pablo~A Parrilo.
\newblock Guaranteed minimum-rank solutions of linear matrix equations via
  nuclear norm minimization.
\newblock {\em SIAM review}, 52(3):471--501, 2010.

\bibitem{gross2010quantum}
David Gross, Yi-Kai Liu, Steven~T Flammia, Stephen Becker, and Jens Eisert.
\newblock Quantum state tomography via compressed sensing.
\newblock {\em Physical review letters}, 105(15):150401, 2010.

\bibitem{CarmeliTeikoJussi1}
Claudio Carmeli, Teiko Heinosaari, Jussi Schultz, and Alessandro Toigo.
\newblock Tasks and premises in quantum state determination.
\newblock {\em Journal of Physics A: Mathematical and Theoretical},
  47(7):075302, 2014.

\bibitem{cubitt2008dimension}
Toby Cubitt, Ashley Montanaro, and Andreas Winter.
\newblock On the dimension of subspaces with bounded schmidt rank.
\newblock {\em Journal of Mathematical Physics}, 49(2):022107, 2008.

\bibitem{kech1}
Michael Kech, P{\'e}ter Vrana, and Michael Wolf.
\newblock The role of topology in quantum tomography.
\newblock {\em Journal of Physics A: Mathematical and Theoretical},
  48(26):265303, 2015.

\bibitem{horn2012matrix}
Roger~A Horn and Charles~R Johnson.
\newblock {\em Matrix analysis}.
\newblock Cambridge university press, 2012.

\bibitem{bhatia2013matrix}
Rajendra Bhatia.
\newblock {\em Matrix analysis}, volume 169.
\newblock Springer Science \& Business Media, 2013.

\end{thebibliography}

\end{document}